\definecolor{myblue}{RGB}{80,80,160}
\definecolor{mygreen}{RGB}{80,160,80}
\providecommand{\NumAgent}{\ensuremath{n}\xspace} 
\providecommand{\Util}[2][]{\ensuremath{ 
\ifthenelse{\equal{#1}{}}{u_{#2}}{u_{#2}^{#1}}}\xspace}  
\providecommand{\aPSNE}{\ensuremath{\text{$\epsilon$-PSNE}}\xspace} 
\providecommand{\G}{\ensuremath{\mathcal{G}}\xspace}    
\providecommand{\V}{\ensuremath{\mathcal{V}}\xspace}    
\providecommand{\E}{\ensuremath{\mathcal{E}}\xspace}    
\providecommand{\Cost}[1]{\ensuremath{c_{#1}}\xspace}    
\providecommand{\g}[1]{\ensuremath{g_{#1}}\xspace}    
\providecommand{\InvNum}[2]{\ensuremath{ n_{#1}^{(#2)} }\xspace}    
\providecommand{\N}[1]{\ensuremath{\mathcal{N}_{#1}}\xspace}    
\providecommand{\Action}[1]{\ensuremath{x_{#1}}\xspace}    
\providecommand{\Dg}[2]{\ensuremath{ \Delta g_{#1}(#2) }\xspace}    
\providecommand{\Dgh}[1]{\ensuremath{ \Delta g(#1) }\xspace}    
\providecommand{\ActionVec}{\ensuremath{\bm{x}}\xspace} 
\providecommand{\BNPG}[1]{\ensuremath{\textsc{BNPG}(#1)}\xspace}
\providecommand{\I}[1]{\ensuremath{\mathcal{I}_{#1}}\xspace}
\providecommand{\T}{\ensuremath{\mathcal{T} }\xspace}
\providecommand{\dMax}{\ensuremath{ d_{\text{max}} }\xspace} 
\providecommand{\numEvo}{\ensuremath{k}\xspace}
\providecommand{\StopCond}{\ensuremath{\delta}\xspace}
\providecommand{\Norm}[2][]{\ensuremath{%
\ifthenelse{\equal{#1}{}}{\|{#2}\|}{\|{#2}\|_{{#1}}}}\xspace}
\providecommand{\SetCard}[1]{\ensuremath{| #1 |}\xspace}
\providecommand{\SET}[1]{\ensuremath{\{ #1 \}}\xspace}
\providecommand{\Set}[2]{\ensuremath{\SET{#1 \mid #2}}\xspace}
\newtheorem{defn}{Definition}[section]
\newtheorem{theorem}{Theorem}
\newtheorem{proposition}[theorem]{Proposition}
\newtheorem{corollary}[theorem]{Corollary}
\definecolor{myblue}{RGB}{80,80,160}
\definecolor{mygreen}{RGB}{80,160,80}
\title{Computing Equilibria in Binary Networked Public Goods Games}
\author[1]{Sixie Yu \thanks{The first two authors contributed equally to the paper.} \thanks{sixie.yu@wustl.edu}}
\author[2]{Kai Zhou}
\author[3]{P. Jeffrey Brantingham}
\author[1]{Yevgeniy Vorobeychik}
\affil[1]{Washington University in St. Louis}
\affil[2]{The Hong Kong Polytechnic University}
\affil[3]{University of California, Los Angeles}
\numberwithin{theorem}{section}
\begin{document}
\date{ }
\maketitle

\begin{abstract}
Public goods games study the incentives of individuals to contribute
 to a public good and their behaviors in equilibria. In this paper, we
 examine a specific type of public goods game where players are
 networked and each has binary actions, and focus on the algorithmic
 aspects of such games. First, we show that checking the
 existence of a pure-strategy Nash equilibrium is NP-complete. 
 We then identify tractable instances based on restrictions of either utility
 functions or of the underlying graphical structure.  In certain
 cases, we also show that we can efficiently compute a socially optimal Nash
 equilibrium.
Finally, we propose a heuristic approach for computing approximate equilibria
in general binary networked public goods games, and experimentally
demonstrate its effectiveness.
\end{abstract}

\section{Introduction}

Public goods games have been a major subject of inquiry as a natural way to model the tension between individual interest and social good~\cite{kollock1998social,santos2008social}.
In a canonical version of such games, individuals choose the amount to invest in a public good, and the subsequent value of the good, which is determined by total investment of the community, is then shared equally by all.
A number of versions of this game have been studied, including variants where players are situated on a network, with individual payoffs determined solely by the actions of their network neighbors~\cite{bramoulle2007public}.

An important variation of networked public goods games treats investment decisions as \emph{binary}~\cite{galeotti2010network}.
One motivating example is crime reporting, known to vary widely, with the general observation that crime is considerably under-reported~\cite{Morgan17}.
In a game theoretic abstraction of crime reporting, individuals choose whether or not to report crimes, and benefits accrue to the broader community, for example, causing a reduction in overall crime rate.
Another example of binary public goods games is vaccination.  In this example, parents decide whether to vaccinate their children, with herd immunity becoming the public good.
To keep terminology general, we will refer to the binary decision whether or not to \emph{invest} in a public good (thus, reporting a crime and vaccinating are forms of such investment).

A special case of binary public goods games (BNPGs), commonly known as \emph{best-shot games}, has received some prior attention~\cite{Dallasta11,galeotti2010network,Komarosvky15,Levit18}.
However, best-shot games make a strong assumption about the structure of the player utility functions.
While \citet{Levit18} recently showed that equilibria for best-shot games can be computed in polynomial time by best response dynamics, these may fail to find social welfare-optimal equilibria, and nothing is known about BNPGs more generally.

We study the problem of computing pure strategy Nash equilibria in general binary public goods games on networks.
We provide examples to show that a pure strategy Nash equilibria is not guaranteed to exist for general BNPGs.
Furthermore, we show that even determining if a pure strategy Nash equilibrium exists is hard in general. 
However, while pure strategy equilibria need not in general exist, and are hard to compute, we exhibit a number of positive results for important special cases.
One class of special cases pertains to binary public goods games for completely connected networks (communities), in which case an equilibrium can be found efficiently if it exists.
Similarly, we can efficiently find a pure strategy equilibrium in general BNPGs when the graph is a tree.
If we further restrict the externalities to have an identical impact on all players (we call this the \emph{homogeneous} case), we can characterize \emph{all} pure strategy equilibria, and efficiently compute a socially optimal equilibrium.
Moreover, if both investment costs and externalities are identical (termed the \emph{fully homogeneous} case), we can characterize \emph{all} pure strategy equilibria for arbitrary networks.
Finally, we present a heuristic approach for finding approximate equilibria to tackle general BNPGs, and experimentally demonstrate that it is highly effective.
Our algorithmic results are summarized in Table~\ref{tab:complexity}.

\subsection{Related Work} 
Our work relates to the broad literature on graphical games, a succinct representation of games in which utilities exhibit local dependences captured by a graph~\cite{kearns2013graphical}.
 \cite{elkind2006nash,daskalakis2009complexity,gottlob2005pure} studied the complexity of computing (mixed-strategy) equilibria in graphical games with utilities defined in the matrix form, and proposed efficient algorithms to find equilibria on graphs with restricted topologies.

Networked public goods games can be regarded as a special class of graphical games where the utilities are functions of the accumulated efforts of individuals. A number of model variations have been proposed to study public goods in different fields such as economics, innovation diffusion, and medical research \cite{feick1987market,valente1996network,burt1987social}. 
Our model is closely related to that proposed by ~\citet{bramoulle2007public}. 
The two qualitative distinctions are that (a) we consider binary investment decisions, in contrast to \citet{bramoulle2007public}, who focus on the more traditional continuous investment model, and (b) \citet{bramoulle2007public} assume homogeneous concave utilities, while we consider a more general setting.

Other related variations of graphical games include supermodular network games \cite{manshadi2009supermodular} and best-shot games \cite{Dallasta11,galeotti2010network,Komarosvky15,Levit18}. 
The latter are a special case of BNPGs, and \citet{Levit18} recently showed that these are potential games with better response dynamics converging to a pure strategy equilibrium in polynomial time.
No other algorithmic results are known for either of these special classes of graphical games.


\begin{table}[h]
\centering
\small
\begin{tabular}{@{}cccc@{}}
\toprule
                          & general                  & complete graph      & tree      \\ \midrule
heterogeneous             & hard                       & poly                & poly                            \\
homogeneous               & hard                       & poly                & poly                           \\
fully-homogeneous         & hard                       & poly                & poly                            \\
fully-homogeneous + convex $g$ & poly                  & poly                & poly                           \\
\bottomrule
\end{tabular}
\caption{An overview of our results.}
\label{tab:complexity}
\end{table}

\section{Model}

A Binary Networked Public Goods game (henceforth BNPG game) is characterized by a graph $\mathcal{G} = (\V, \E)$; throughout this chapter we assume that \G is simple, undirected, and loop-free.
The players consist of the node set $\V = \SET{1,2,\ldots,n}$; the edge set  $\E = \Set{(i,j)}{i,j \in \V}$ represents the interdependencies among the players.
Let $\Action{i} \in \SET{0, 1}$ be player $i$'s action; we can think of $\Action{i}=1$ (resp. $\Action{i}=0$) as the player decides to invest (resp. not invest) in some public goods.
The action profile of all players is represented by $\ActionVec \in \SET{0,1}^n$ and the action profile of all players other than player $i$ is $\ActionVec_{-i}$.

Let $\N{i} = \Set{j \in \V}{(i, j) \in \E}$ be the neighbors of $i$.
Define $\InvNum{i}{\ActionVec} = \sum_{j\in \N{i}} x_j$ as the number of $i$'s  neighbors who choose to invest, e.g., in the crime reporting scenario $\InvNum{i}{\ActionVec}$ represents the number of $i$'s neighbors who report crimes.
Reporting crime is costly, e.g., time consuming or even dangerous.
We capture the cost of each player with a constant $\Cost{i}$.
Player $i$'s utility function is defined below
    \begin{equation}
        \label{eqn-cost}
            \Util{i}(\ActionVec) = \Util{i}(\Action{i}, \ActionVec_{-i}) = \Util{i}(\Action{i}, \InvNum{i}{\ActionVec}) = \g{i}(\Action{i} + \InvNum{i}{\ActionVec}) - \Cost{i} \Action{i},
    \end{equation}
where $\g{i}(\Action{i} + \InvNum{i}{\ActionVec})$ captures the positive externality that $i$ experiences from her neighbors' (and her own) investment.
As is standard in public goods literature, we assume that \g{i} is a \emph{non-decreasing} function; in the case of crime reporting this means that higher reporting reduces crime rate.
Our definition of $\g{i}$ generalizes the one given by \citet{bramoulle2007public}; they assume that $\g{i}$ is a twice differentiable strictly concave function, which may be violated in some scenarios. 
For example, studies of incentives in P2P systems entail natural models in which $\g{i}$ is an S-shaped function~\cite{buragohain2003game}; commonly considered best-shot games model $\g{i}$ as a step function~\cite{galeotti2010network}; $\g{i}$ may be convex as in models of social unrest~\cite{chwe2000communication}.
Our definition of $\g{i}$, on the other hand,  is flexible enough to model the aforementioned scenarios.
Notice that each function $\g{i}$ can be represented using $O(n)$ values, so the entire BNPG game (including the graph structure) can be represented using $O(n^2)$ values.
A useful observation is that player $i$ chooses to invest only if  $\Util{i}(1, \InvNum{i}{\ActionVec}) \ge \Util{i}(0, \InvNum{i}{\ActionVec})$,
which is rewritten as 
    \begin{equation}\label{eqn-condition-2}
            \Dg{i}{\InvNum{i}{\ActionVec}} := \g{i}(\InvNum{i}{\ActionVec} + 1) - \g{i}(\InvNum{i}{\ActionVec}) \ge \Cost{i}.
    \end{equation}
The finite difference $\Dg{i}{\InvNum{i}{\ActionVec}}$ will be useful in later analysis.
In general, a BNPG is called \emph{heterogeneous} if the players have different utility functions, i.e., different \Cost{i} and \g{i}.
As we discuss later, checking the existence of a Nash equilibrium in a heterogeneous BNPG is hard.
Due to the hardness result, we also study several special cases such that positive algorithmic results can be obtained; intuitively, we make different assumptions about the utility functions or the underlying graph.
For example, the two cases below restrict the externality functions or the cost parameters of the players.
    \begin{description}
        \item[Homogeneous BNPG:] The players share the same externality function, i.e., $\g{1} = ,\ldots, = \g{n}$.
        \item[Fully Homogeneous BNPG:] In addition to having the same externality function, the players share the same cost, i.e., $\Cost{1} = ,\ldots, = \Cost{n}$.
    \end{description}

We use $\BNPG{\G, \mathcal{U} }$ to represent an instance of BNPG with  underlying graph $\G$; the set $\mathcal{U}$ consists of the players' utility functions, i.e., $\mathcal{U}=\Set{\Util{i}}{ i \in \V}$, which contains all the information about \g{i} and \Cost{i}.
Given a BNPG instance, We seek \emph{pure-strategy} Nash equilibria of the game. 
A pure-strategy Nash equilibria (PSNE) is defined as follows:
\begin{defn}
	Given $\BNPG{\G,\mathcal{U}}$, 
	a pure-strategy Nash equilibrium (PSNE) of the game is an action profile $\ActionVec \in \SET{0, 1}^n$ such that for every player $\Util{i}(\Action{i}, \InvNum{i}{\ActionVec}) \ge \Util{i}(1-\Action{i}, \InvNum{i}{\ActionVec})$.
\end{defn}

We differentiate two types of PSNE: 1) \emph{trivial} PSNEs, including $\ActionVec = \bm{1}$ (i.e., all players invest) and $\ActionVec=\bm{0}$ (i.e., no player invests) and 2) \emph{non-trivial} PSNEs, i.e., any action profile $\ActionVec \notin \SET{\bm{0}, \bm{1}}$.
The formal definitions are as follows:
    \begin{defn}
        Given a BNPG, suppose at least one PSNE exists.
        Let $\ActionVec \in \SET{0, 1}^n$ be a PSNE.
        If $\ActionVec \in \SET{\bm{0}, \bm{1}}$, it is a trivial PSNE; otherwise, it is a non-trivial PSNE.
    \end{defn}

Given a BNPG, suppose it has at least one non-trivial PSNE with $0 < k < \NumAgent$ investing players.
We identify two types of players in the PSNE: 1) the players who \emph{always} invest  and 2) the players who \emph{never} invest.
Intuitively, the two types of players either invest or do not invest, no matter what other players' actions are. 
The formal statement is as follows:
    \begin{proposition}\label{P:always}
        Given a BNPG, suppose it has at least one non-trivial PSNE with $0 < k < \NumAgent$ investing players.
        Define the following two sets:
        \begin{equation}
            \begin{aligned}
	            & \I{+}(k):=\Set{i \in \V}{ \Cost{i} < \Dg{i}{k}    }, \\
	            & \I{-}(k):=\Set{i \in \V}{ \Cost{i} > \Dg{i}{k-1}  }.
            \end{aligned}
        \end{equation}
        It follows that the set $\I{+}(k)$ includes the players who always invest while the set $\I{-}(k)$ the players who never invest; in addition, there is no \Cost{i} such that $\Dg{i}{k-1} < \Cost{i} < \Dg{i}{k}$.
    \end{proposition}
    \begin{proof}
    Suppose there is a player $j \in \I{+}(k)$ who may not invest; it follows that $\Util{j}(0, k) \ge \Util{j}(1, k)$ and thus $\Cost{j} \ge \Dg{j}{k}$, which contradicts the definition of $\I{+}(k)$.
    Similarly, the set $\I{-}(k)$ includes the players who never invest.
    The argument is similar: suppose a player $j \in \I{-}(k)$ who may invest; it follows that $\Util{j}(1, k-1) \ge \Util{j}(0, k-1)$ and thus $\Dg{j}{k-1} \ge \Cost{i}$, a contradiction to the definition.
    Since $\I{+}(k)$ is disjoint from $\I{-}(k)$, we have that there is no \Cost{i} such that $\Dg{i}{k-1} < \Cost{i} < \Dg{i}{k}$.
    \end{proof}


\section{Heterogeneous BNPG}

We first construct an example to show that a heterogeneous BNPG does not necessarily have a PSNE.
Moreover,  we show that checking the existence of a PSNE is NP-complete.
Due to the hardness result, we restrict the game in several aspects in order to obtain positive algorithmic results.
In particular, we restrict the underlying graph \G to a complete graph or a tree, which leads to a polynomial-time algorithms to compute a PSNE (or conclude no one exists). 
For general \G, we describe a heuristic algorithm to find an approximate PSNE.

We start by constructing a BNPG where no PSNE exists.
The game has two players $A$ and $B$. 
Let $\Util{A}(x_A, x_B) =  g_A(x_A+x_B) - c_A x_A$ and $\Util{B}(x_B, x_A)= g_B(x_B + x_A) - c_B x_B$ be the utility functions of $A$ and $B$, respectively. 
Consider the setting where  $g_A(0)=\epsilon$, $g_A(1)=c_A$, $g_A(2)=2c_A + \epsilon$, and $g_B(0)=\epsilon$, $g_B(1)=c_B + 2\epsilon$, $g_B(2)=2c_B + \epsilon$, where $\epsilon \le c_A$ and $\epsilon \leq c_B$, ensuring that both $g_A$ and $g_B$ are non-decreasing.
By checking the four possible action profiles of the game, it is direct to verify that no PSNE exists.

Next, we show that checking if a heterogeneous BNPG has a PSNE is NP-complete.

\begin{theorem}\label{thm-hard}
Given a heterogeneous BNPG, checking the existence of a PSNE is NP-complete.
\end{theorem}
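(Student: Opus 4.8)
First, PSNE membership is immediate: given a candidate profile $\mathbf{x}\in\{0,1\}^n$, compute each $m_i := \sum_{j\in\mathcal N_i} x_j$ and check the equilibrium condition---$\Delta g_i(m_i)\ge c_i$ whenever $x_i=1$, and $\Delta g_i(m_i)\le c_i$ whenever $x_i=0$---for every player. Since each $g_i$ is given by $O(n)$ values, this verification runs in polynomial time, so the problem lies in NP. The work is therefore in establishing NP-hardness.

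\textbf{Reduction and the two enabling observations.} I would reduce from $3$-SAT. Two observations drive the construction. (i) By the best-response condition \eqref{eqn-condition-2}, player $i$'s \emph{unique} best response (when we avoid the tie $\Delta g_i(m_i)=c_i$) is to invest iff $\Delta g_i(m_i)>c_i$; since the only constraint on $g_i$ is $\Delta g_i\ge 0$, for \emph{any} target set $S\subseteq\{0,\dots,\deg(i)\}$ we may take $c_i=1$, set $\Delta g_i(m)=2$ for $m\in S$ and $\Delta g_i(m)=0$ otherwise, so that $g_i$ (the prefix sum) is non-decreasing. Thus a player's best response can realize an \emph{arbitrary} function of the number of investing neighbors. (ii) The two-player instance built just above the theorem is a \emph{matching-pennies} gadget (one player wants to match a neighbor, the other to anti-match) that admits \emph{no} PSNE; this will serve as a planted trap.

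\textbf{Gadgets.} The plan is: a \emph{variable gadget} producing a free bit $x_j$ together with several equilibrium-consistent copies of its value routed to the clauses; \emph{literal} wires that either copy (positive literals) or anti-match (negative literals) via observation (i); and, for each clause $C$, an \emph{OR node} $Q_C$ whose best response, again by (i), is to invest iff at least one of its three incident literal-copies invests. To $Q_C$ I attach a copy of the matching-pennies trap, but \emph{conditioned} on $Q_C$: when $Q_C=0$ (clause unsatisfied) the two trap players are forced into the match/anti-match cycle and have no joint best response, whereas when $Q_C=1$ they are defused into a consistent assignment. Because a player sees only the \emph{count} of investing neighbors, conditioning on $Q_C$ requires separating ``$Q_C=0$'' from ``$Q_C=1$'' into disjoint count-ranges; I achieve this by giving $Q_C$ an integer weight inside each trap player through duplicated dummy copies of $Q_C$, so the trap player's threshold function can behave differently according to whether that weighted block is present.

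\textbf{Correctness and the main obstacle.} Correctness then splits into two implications. If $\phi$ is satisfiable, set each $x_j$ to its truth value, propagate the copies, force each $Q_C=1$, and defuse every trap; one verifies this profile is a PSNE. Conversely, in any PSNE the variable gadgets read off a truth assignment, and since every trap must be defused, every $Q_C=1$, i.e.\ every clause is satisfied. The main obstacle---and where the construction must be engineered with care---is precisely the \emph{count-only}, symmetric dependence of each utility on its neighbors: individual neighbor identities are invisible, so encoding per-wire logic (copying, negation, fan-out of a variable across several clauses, and conditioning a trap on $Q_C$) forces vertex duplication and weighting to push control signals into disjoint count-ranges, all while keeping the instance of polynomial size and guaranteeing that every auxiliary and dummy node is itself in equilibrium in exactly the intended profiles.
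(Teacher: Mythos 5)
Your NP-membership argument is fine, and the high-level idea of planting a matching-pennies trap that is ``defused'' exactly when a clause is satisfied is sound in spirit---the paper's own reduction also makes equilibrium existence hinge on whether a trap can be avoided. But there is a genuine gap: the key gadget, conditioning the trap on $Q_C$ (and, more generally, all of your wiring---copy, negation, fan-out), is never actually constructed, and the difficulty you yourself flag as ``the main obstacle'' is precisely where the construction as described breaks. Utilities depend only on the \emph{count} of investing neighbors, and edges are \emph{undirected}, so every wire feeds back into its source. Concretely, a ``dummy copy'' $w$ of $Q_C$ that is adjacent both to $Q_C$ (to read its value) and to a trap player $A$ (to transmit it) sees only the sum $x_{Q_C}+x_A$; the counts for $(x_{Q_C},x_A)=(1,0)$ and $(0,1)$ coincide, so $w$ cannot be forced to track $Q_C$ in every equilibrium. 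Duplicating $w$ to create a ``weighted block'' reproduces the same ambiguity one level down, since each new copy again has both a source and a destination among its neighbors. The same feedback problem afflicts the variable fan-out and the OR node, whose literal neighbors' counts now include $Q_C$ itself. None of this is obviously fatal to a SAT-based reduction, but resolving it constitutes the entire technical content of the hardness proof, and the proposal leaves it as an unproven engineering claim. Relatedly, the converse direction (``in any PSNE the variable gadgets read off a truth assignment'') cannot be checked without concrete gadgets, because spurious equilibria in which auxiliary nodes take unintended values must be ruled out.

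For comparison, the paper's proof avoids wiring altogether by reducing from \textsc{Independent Set}: it adds a single universal node $t$ adjacent to every node of the input graph, gives each original node best-shot behavior (invest iff no neighbor invests), and gives $t$ a window threshold (invest iff $0 < n_t < k$). A maximal independent set of size at least $k$ yields a PSNE; conversely, in any PSNE the investing set is an independent set of size at least $k$, because a nonempty investing set of size less than $k$ triggers $t$, whose investment destabilizes its investing neighbors---the matching-pennies trap arises globally, with no per-clause conditioning required. The lesson is that the source problem should match the anonymous, count-only structure of the utilities: degree-type thresholds mesh naturally with independent sets, which is exactly what keeps the paper's reduction short. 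If you wish to salvage the 3-SAT route, you must exhibit copy/negation/OR gadgets whose intended behavior is forced in \emph{every} equilibrium despite anonymity and feedback, and prove both directions for the completed construction.
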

\begin{proof}
Given an action profile, it takes polynomial time to check if it is a PSNE, so the problem is in NP.
We construct a reduction from the \textsc{Independent Set} (IS) problem.
Given a graph $\G$, the \textsc{IS} problem is to decide whether there is an independent set $S$ of size at least $k$.
Given an instance of \textsc{IS} (i.e., a graph $\G=(\V', \E)$ and an integer $k$), we construct a new graph $\mathcal{H}=(\V, \E)$ by adding an additional node $t$ and connecting $t$ to every node in \G, that is $\V = \V' \cup \SET{t}$.
Consider the following best-response policy for the players:
\begin{equation}\label{eq:best-response-hete}
	\begin{aligned}
		i \in \V' : \Action{i} = \begin{cases}
							1, & \InvNum{i}{\ActionVec} = 0 \\
							0, & \InvNum{i}{\ActionVec} \ne 0
					   \end{cases} 
		\text{ and }      
		            \Action{t} = \begin{cases}
	    				   1, & 0 < \InvNum{t}{\ActionVec} < k \\
	    				   0, & \InvNum{t}{\ActionVec} = 0 \text{ or } \InvNum{t}{\ActionVec} \ge k.
	    			   \end{cases}
	\end{aligned}
\end{equation}
One way to realize the above is as follows: for any player $i \in \V'$, let $\g{i}(0)=0$,  $\g{i}(x)=1$ for any $x > 0$, and $\Cost{i}=0.5$; for player $t$, let $\g{t}(0) = \g{t}(1) = 0$, $\g{t}(x) = x$ for $1 < x \le k$, $\g{t}(x)=k$ for $x \ge k+1$, and $\Cost{t}=0.5$. 
A heterogeneous BNPG is constructed on the graph $\mathcal{H}$, with the utility functions as $\g{i}$ and $\g{t}$.
We show that finding an independent set of size at least $k$ in graph \G is equivalent to finding a PSNE of the game. 

First, suppose we are given an independent set $S$ of size $k$. 
Then node $t$ is not in $S$ since $t$ connects
to every node in \G. 
We check the remaining nodes in $\V \setminus S$ and iteratively add into $S$ those nodes that have no connections to any node in $S$, which results in $\hat{S}$. Note that $\hat{S}$ is maximal and its size is at at least $k$.
We argue that all nodes in $\hat{S}$ choosing $1$ and
the rest of the nodes choosing $0$ is an equilibrium. For a node
in $\hat{S}$, it will not deviate since none of its neighbors chooses $1$.
For node $t$, it will not deviate since it has at least $k$ neighbors that choose $1$. For a node $u \in \V \setminus \hat{S}$, it must connect to at least one node in $\hat{S}$; otherwise $\hat{S}$ is not maximal. Thus $u$ will also not deviate.

Second, suppose we are given a PSNE \ActionVec. 
There must be some nodes choosing $1$ in the equilibrium; otherwise, for an arbitrary node in \G it has no investing neighbor and it would prefer choosing $1$.
If the equilibrium is that all nodes choose $1$, we must have $\InvNum{t}{\ActionVec} < k$; however, recall that $t$ connects to all nodes in \G, i.e., $\InvNum{t}{\ActionVec} = \SetCard{\V}$; consequently, we have a contradiction that $\SetCard{\V} < k$.
So the equilibrium cannot be that all nodes choose $1$.
Suppose in the equilibrium the nodes are divided into two sets $S$ and $T$ where nodes in $S$ choose $1$ and nodes in $T$ choose $0$. 
Further note that $t$ cannot be in $S$. Because another node in $S$ would prefer to choose $0$ as it connects to $t$. 
Thus, the nodes in $S$ form an independent set. 
For a node $u$ in $T$, it must be true that $u$ connects to at least one node in $S$ since $u$ chooses $0$. 
Furthermore, $t$ choosing $0$ means that the size of $S$ is at least $k$. Thus, the set $S$ is an independent
set with size at least $k$.
\end{proof}

Due to the hardness results presented in Theorem~\ref{thm-hard}, we next restrict the underlying graph to be either a complete graph or a tree, which leads to a series of positive results.

\subsection{Games on Complete Graphs}
We assume that the underlying graph \G is complete.
For ease of exposition, we omit the parameterization of $\mathcal{U}$ and an instance of BNPG is denoted by $\BNPG{\G}$.
We show that it takes polynomial time to find a PSNE or conclude no one exists.

First, we characterize the non-trivial PSNEs of a heterogeneous BNPG.

        \begin{proposition}\label{P:psne}
            Consider a heterogeneous \BNPG{\G} where \G is a complete graph.
            The game has a non-trivial PSNE with $k \in (0, n)$ investing players if and only if 
            1) there is no \Cost{i} such that $\Dg{i}{k-1} < \Cost{i} < \Dg{i}{k}$ and 
            2) $\SetCard{\I{+}(k)} \le k$, $\SetCard{\I{-}(k)} \le n-k$, and $\SetCard{\Set{i \in \V}{i \in \V \setminus (\I{+}(k) \cup \I{-}(k))}} \ge k - \SetCard{\I{+}(k)}$.
        \end{proposition}
        \begin{proof}
            $(\Rightarrow)$
            Suppose \ActionVec is a PSNE with $k \in (0, n)$ investing players.
            From Proposition~\ref{P:always}, for player $i$ if $\Cost{i} < \Dg{i}{k}$ the player must invest.
            Thus, the number of players in $\I{+}(k)$ is at most $k$; otherwise, there would be more than $k$ investing players.
            Similarly,  if $\Cost{i} > \Dg{i}{k-1}$  the player never invests; thus, the number of players in the set $\I{-}(k)$ is at most $n-k$; otherwise, the number of players that can invest would be less than $k$.
            From Proposition~\ref{P:always}, we know that there is no \Cost{i} such that $\Dg{i}{k-1} < \Cost{i} < \Dg{i}{k}$.
            Let $\T= \V \setminus (\I{+}(k) \cup \I{-}(k))$, i.e., the set of players who are indifferent between investing and not investing.
            To have exactly $k$ investing players, we need $k - \SetCard{\I{+}(k)}$ players from \T, which shows the lower bound on its cardinality.
            
            $(\Leftarrow)$
            We construct a PSNE with exactly $k$ investing players.
            First, let the players in $\I{+}(k)$ invest; they do not deviate as they always invest. 
            Then, we arbitrarily pick $k - \SetCard{\I{+}(k)}$ players from \T and make them invest; they do not deviate as they are indifferent between investing and not investing.
            Finally, let the remaining players not invest; they do not deviate as they are either indifferent between investing/not investing or never invest. 
        \end{proof}

Proposition~\ref{P:psne} suggests the following algorithm to compute a PSNE or conclude that on one exists.
Given \BNPG{\G}, we first make sure there is not any $\Cost{i}$ such that $\Dg{i}{k-1} < \Cost{i} < \Dg{i}{k}$; otherwise, we conclude that the game has no PSNE.
We then check if the game has any trivial PSNE.
Finally, by changing $k$ from $1$ to $n-1$, we check for each $k$  if the game has a non-trivial PSNE with $k$ investing players.
The algorithm runs in $O(n^2)$ time.

\subsection{Games on Trees}
Another important special case is when \G is a tree.
We now present a polynomial-time algorithm to compute a PSNE or
conclude no one exists. 
Our algorithm, \textsc{TreePSNE}, is inspired by
\textsc{TreeNash} proposed by \citet{kearns2013graphical}, but unlike the
latter, it computes an exact PSNE (rather than an approximate mixed
Nash equilibrium) in polynomial time.

\begin{theorem}\label{th:TreePSNE}
Given a heterogeneous \BNPG{\G} where \G is a tree.
It takes polynomial time to compute a PSNE or conclude that no one exists.
\end{theorem}

\begin{proof}
The proof is constructive.
Suppose \G is an inverted tree, where the root is at the bottom and the leaves at the top. 
The nodes in \G are categorized into three classes: leaf nodes, internal nodes, and the root. 
Our algorithm consists of two passes: a \emph{downstream} pass and an \emph{upstream} pass.
In the downstream pass we traverse the nodes in a depth-first order (start with the leaves and end at the root).
Each leaf or internal node passes a table to its parent; the table contains the best responses of the node conditioned on the action of its parent; we call the table \emph{conditional best-response table}.

\providecommand{\SubG}[1]{\ensuremath{\G_{#1}}\xspace}
\providecommand{\Inv}[1]{\ensuremath{ n'_{#1}  }\xspace}
\providecommand{\Y}{\ensuremath{k}\xspace}
\providecommand{\PY}{\ensuremath{p}\xspace}

Consider a node \Y.
Let \SubG{\Y} be the subtree consisting of \Y and its children; if \Y is a leaf node then the tree is a single node.
Define \Inv{\Y=1} (resp. \Inv{\Y=0})  as the number of \Y's  children that invest given that \Y invests (resp. does not invest); $\Inv{\Y=1} = \Inv{\Y=0} = 0$ if \Y is a leaf node.
For now suppose \Y is not the root; we will discuss the root case separately.
Let \PY be \Y's parent.
Denote their action profile by $(\Action{\PY}, \Action{\Y})$.
The necessary condition that $(\Action{\PY}, \Action{\Y})$ is part of a PSNE is captured in \eqref{eq:downstream}, where the ``$\diamond$'' is ``$\ge$'' if $\Action{\Y}=1$ and ``$\le$'' otherwise.

\begin{equation}\label{eq:downstream}
    \Dg{\Y}{\Inv{\Y=\Action{\Y}} + \Action{\PY}} \diamond \Cost{\Y}.
\end{equation}

Suppose \Y is a leaf node and \PY is its parent. 
To compute the best-response table of \Y, we exhaustively check the four possible action profiles of \Y and \PY, which takes $O(1)$ time. 
If no action profile satisfies \eqref{eq:downstream}, we conclude that no PSNE exists.

\begin{figure}[H]
\centering
\begin{tikzpicture}[thick,
  every node/.style={circle, scale=0.6},
  LeafNode/.style={draw,fill=myblue, minimum size=1pt},
  InternalNode/.style={draw,fill=mygreen, minimum size=1pt},
  every fit/.style={rectangle,draw,inner sep=-2pt,text width=2cm},
  ->,shorten >= 2pt,shorten <= 2pt
]

\node[InternalNode] (j1) {$j_1$};
\node[draw,rectangle,scale=1.2] at (-1.5, -1) {
	$\begin{aligned}
		& x_{\Y} = 1 \rightarrow x_{j_1} = 0 \\
		& x_{\Y} = 0 \rightarrow x_{j_1} = 1
	\end{aligned}$
};
\node[scale=1.5] at (1, 0) (tmp) {$\ldots$};
\node[InternalNode] at (2,0) (jm) {$j_m$};
\node[draw,rectangle,scale=1.2] at (3.8, -1) {
	$\begin{aligned}
		& x_{\Y} = 1 \rightarrow x_{j_m} = 1  \\
		& x_{\Y} = 1 \rightarrow x_{j_m} = 0  \\
		& x_{\Y} = 0 \rightarrow x_{j_m} = 1  \\
		& x_{\Y} = 0 \rightarrow x_{j_m} = 0  \\
	\end{aligned}$
};
\node[InternalNode] (Y) at (1, -1) {$\Y$};
\node[InternalNode] (W) at (1, -2) {$\PY$};

\draw[-](j1) -- (Y);
\draw[-](jm) -- (Y);
\draw[-](Y) -- (W);

\end{tikzpicture}

\small \caption{The conditional best-response table for internal nodes $j_1, \dots, j_m$.}
\label{fig:internalNode}
\end{figure}
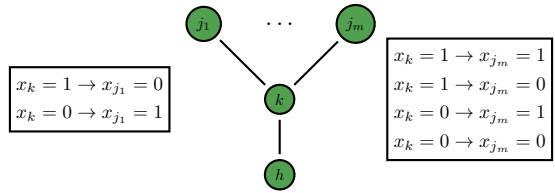

\providecommand{\AlwayInv}[1]{\ensuremath{ n^1_{#1} }\xspace}
\providecommand{\NeverInv}[1]{\ensuremath{ n^0_{#1} }\xspace}

Now, suppose \Y is an internal node, with $m$ children $j_1, \ldots, j_m$; the parent of \Y is \PY.
For the purpose of induction, suppose the conditional best-response tables of $j_1, \ldots, j_m$ have been passed to \Y.
An example is showed in Figure~\ref{fig:internalNode}, where the two tables beside $j_1$ and $j_m$ consist of the best responses of the two nodes conditioned on \Y's action.
We need to check if there is a an action profile $(\Action{\PY}, \Action{\Y})$ satisfying \eqref{eq:downstream}, which requires us to compute \Inv{\Y=0} and \Inv{\Y=1}.
Recall that \dMax is the maximum degree of \G.
When $\Action{\Y} = 0$, we denote the number of \Y's children that always invest (resp. never invest) as \AlwayInv{\Y=0} (resp. \NeverInv{\Y=0}).
To compute \AlwayInv{\Y=0}, we iterate through the conditional best-response tables of $j_1, \ldots, j_m$ and  count the number of children who always invest conditioned on $\Action{\Y}=0$.
For example, node $j_1$ in Figure~\ref{fig:internalNode} always invests when $\Action{\Y}=0$, since that is the only option; however, node $j_m$ does not necessarily invest since both $\Action{j_m}=0$ and $\Action{j_m}=1$ are best responses.
We similarly compute \NeverInv{\Y=0}.
Note that $\AlwayInv{\Y=0} + \NeverInv{\Y=0} \le m$.
Thus, the range of $\Inv{\Y=0}$ is $\SET{ \AlwayInv{\Y=0}, \AlwayInv{\Y=0} + 1, \ldots,  m - \NeverInv{\Y=0} }$.
We obtain the range of $\Inv{\Y=1}$ similarly.
Finally, we identify action profiles of \PY and \Y that are part of a PSNE,  by substituting into \eqref{eq:downstream} the four possibilities of $(\Action{\PY}, \Action{\Y})$ and the corresponding values of $\Inv{\Y}$ conditioned on \Action{\Y}.
If no $(\Action{\PY}, \Action{\Y})$ satisfies \eqref{eq:downstream}, we conclude that  no PSNE exists.

Suppose node \Y is the root.
We compute the \emph{best-response table} of \Y as follows.
Following the above procedure for an internal node, we compute \Inv{\Y=0} and \Inv{\Y=1}.
Then, we check if $\Action{\Y}=0$ (resp. $\Action{\Y}=1$) is a best response by checking if $\Dg{\Y}{\Inv{\Y=0}} \le \Cost{\Y}$ (resp. $\Dg{\Y}{\Inv{\Y=0}} \ge \Cost{\Y}$) is satisfied.
Finally, we put \Y's best responses into the table. 
In summary, the downstream pass takes $O(\dMax \SetCard{\V} + \SetCard{\E})$ time.

We show that the downstream pass does not miss any PSNE if there is one, by an inductive argument. 
Suppose there is a PSNE that is missed by the downstream pass.
Then we must wrongly conclude that no PSNE exists at some node \Y as we move downward. 
First, the node \Y cannot be a leaf node, since we exhaustively check the four possible action profiles of \Y and its parent. 
Next, suppose the PSNE is not missed until an internal node \Y, in other words, we must wrongly exclude an action profile of \Y and its parent \PY.
However, we exhaustively check all combinations of $\Inv{\Y}$, $\Action{\Y}$, and $\Action{\PY}$, so it is not possible to miss such an action profile.
The same argument applies for the case where node $k$ is the root.

In the upstream pass we traverse \G in a reversed depth-first order (start at the root and end the leaves). 
We first select the action of the root.
If the root is indifferent between investing and not investing, we arbitrarily pick one. 
Next, we sequentially determine each node's action based on its parent's action and the conditional best-response table passed to the parent (ties are broken arbitrarily).
This pass takes $O(\SetCard{\V})$ time.
In conclusion, the total running time of \textsc{TreePSNE}
is $O(\dMax \SetCard{\V} + \SetCard{\E})$.
\end{proof}

\subsection{Games on Arbitrary Graphs}

We now describe a heuristic algorithm to find an approximate PSNE for $\BNPG{\G}$ defined on a general graph.
An approximate PSNE is defined as follows:

\begin{defn}
    Given a BNPG, an approximate PSNE (i.e., \aPSNE) is an action profile $\ActionVec \in \SET{0, 1}^n$ such that for all $i \in \V$:
	    \begin{equation*}
	           \Util{i}(\Action{i}, \InvNum{i}{\ActionVec}) + \epsilon \ge  \Util{i}(1-\Action{i}, \InvNum{i}{\ActionVec}).
	    \end{equation*}
\end{defn}

The heuristic algorithm consists of two subroutines.
The first one is termed as \textsc{Asynchronous-BR},
which is based on the idea of best-response dynamics.
In each iteration of \textsc{Asynchronous-BR}, every player sequentially updates her action by best-responding to the current action profile in an asynchronous manner (i.e., later players can observe the actions of former players). 
The second subroutine is termed \textsc{Evolve}; it executes \textsc{Asynchronous-BR} \numEvo times and each execution generates an \aPSNE; finally \textsc{Evolve} picks the \aPSNE with the minimum $\epsilon$.
The heuristic algorithm is summarized in Algorithm~\ref{algo:heuristic}, where the two sub-routines are described in Algorithms~\ref{algo:asyn-br} and \ref{algo:evolve}.
The input \StopCond is the stopping criterion; the algorithm stops when the distance of two consecutive profiles (e.g., measured by some $\ell_p$ norm) is less than \StopCond or when the maximum number of iterations is reached.

\begin{algorithm}[ht]
	\caption{Heuristic to find \aPSNE}\label{algo:heuristic}
		\begin{algorithmic}[1]
			\State \textbf{Input}: $\numEvo$, $\StopCond$, $K$ \Comment{$B$: maximum number of iterations}
			\State Initialize: $d = M, i = 0$ \Comment{$M$: a large positive number}
			\State $\ActionVec \leftarrow $ random initializatioin
			\If{\ActionVec is a PSNE}
			    \State return \ActionVec
			\EndIf
			\While{$d \ge \StopCond$ and $i < B$}
			\State $i \leftarrow i + 1$
			\State $\ActionVec' \leftarrow$ \textsc{Evolve}($\ActionVec, \numEvo$) \Comment{run $\numEvo$ trials of the \textsc{Evolve} subroutine}
			\If{$\ActionVec'$ is a PSNE}
				\State return $\ActionVec'$
			\Else
				\State $d \leftarrow \Norm{\ActionVec' - \ActionVec}_p$ 
			\EndIf
			\EndWhile
			\State return $\ActionVec'$
		\end{algorithmic}
\end{algorithm}

\begin{algorithm}[ht]
	\caption{Asynchronous-BR}\label{algo:asyn-br}
	\begin{algorithmic}[1]
		\State \textbf{Input}: $\ActionVec$
		\For{$i=1, \ldots, n$}
		\If{$\Delta g_i(n_i) > c_i$}
		    \State $\Action{i} = 1$
		\ElsIf{$\Dg{i}{n_i} < c_i$}
		    \State $\Action{i} = 0$
		\Else
		    \State randomly set $\Action{i}=1$ or $0$ with probability $0.5$
		\EndIf
		\EndFor
	\end{algorithmic}
\end{algorithm}

\begin{algorithm}[ht]
	\caption{Evolve}\label{algo:evolve}
	\begin{algorithmic}[1]
		\State \textbf{Input}: $\ActionVec, \numEvo$
		\State Initialize: $\epsilon^\ast=M, \ActionVec^\ast=\bm{0}$ \Comment{$M$: a large positive number}
		\For{$i=1, \ldots, \numEvo$}
		\State $\epsilon \leftarrow$ \Call{maxEpsilon}{$\bm{x}$} \Comment{\Call{maxEpsilon}{\ActionVec}: $\max\Set{\Util{i}(1-\Action{i}, \InvNum{i}{\ActionVec}) - \Util{i}( \Action{i}, \InvNum{i}{\ActionVec})}{i \in \V}$}
		\If{$\epsilon = 0$}
		    \State return \ActionVec \Comment{Find a PSNE}
		\EndIf
		\If{$\epsilon < \epsilon^\ast$}
		    \State $\epsilon^\ast = \epsilon, \ActionVec^\ast=\ActionVec$
		\EndIf
		\State $\bm{x} \leftarrow$ \texttt{Asynchronous-BR}($\bm{x}$)
		\EndFor
		\State return $\ActionVec^\ast$
	\end{algorithmic}
\end{algorithm}

\section{Homogeneous BNPG}

In this section we study algorithmic issues of a \emph{homogeneous} BNPG, i.e., all players share the same externality function $g$.
We construct an example to show that a PSNE may not exist.
In addition, we show that checking the existence of a non-trivial PSNE is NP-complete.
We then make further restrictions on either the underlying graph \G or the utility functions.
Specifically, when \G is a complete graph we show that it takes polynomial time to compute a socially optimal PSNE.
In addition, when the players share the same externality function $g$ and the same cost $c$, the non-trivial PSNEs are characterized by $k$-cores of \G.
In the following discussion, we simplify \InvNum{i}{\ActionVec} to $n_i$ when the context is clear.

\subsection{Existence of PSNE}\label{sec:homo-BNPG-existence}

We construct a game where a PSNE does not exist. 
This game has three players $\SET{1,2,3}$. 
The underlying graph is a simple path with player $2$ in the middle. 
We set the costs as follows: $\Cost{1} = 1$, $\Cost{2} = 2$, and $\Cost{3} = 3$. 
The externality function satisfies that $\Dgh{0} = 1.5$, $\Dgh{1} = 3.5$,  and $\Dgh{2} = 0.5$; one way to realize such a function is $g(0)=4.5$, $g(1)=6$, $g(2)=9.5$, and $g(x)=10$ for any $x \ge 3$, ensuring that $g$ is non-decreasing.
By exhaustively checking the eight possible profiles, it is direct to verify that no PSNE exists.

Recall that a trivial PSNE is an action profile where either no player invests or all players invest.
Given a homogeneous BNPG, it is direct to check if the game has a trivial PSNE.
A non-trivial PSNE is an action profile where exactly $k$ players invest with $0 < k < n$.
The following theorem shows that checking the existence of a non-trivial PSNE is NP-complete, which is implied from Theorem~\ref{th:fully-homo-hard}.

\begin{theorem}
\label{th:homo-hard}
Checking if a homogeneous BNPG has a non-trivial PSNE is NP-complete.
\end{theorem}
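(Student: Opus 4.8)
The plan is to establish NP-completeness in the usual two parts: membership in NP and NP-hardness by reduction. For membership, I would observe that a non-trivial PSNE itself serves as a polynomial-size certificate: given any candidate profile $\ActionVec \in \SET{0,1}^n$, one computes each $\InvNum{i}{\ActionVec}$ and checks the best-response condition~\eqref{eqn-condition-2} against \Cost{i} for every player in polynomial time, and one confirms non-triviality by verifying that the number $k$ of investing players satisfies $0 < k < n$. Hence the decision problem lies in NP.

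For hardness, the cleanest route is to exploit containment of the problem classes rather than to build a fresh reduction. By definition, a \emph{fully homogeneous} BNPG (all players share both the externality function $g$ and the cost $c$) is a special case of a \emph{homogeneous} BNPG (all players share $g$, but costs may differ). Thus the identity map carries any fully-homogeneous instance to a homogeneous instance that is \emph{literally the same game}, so its PSNE set—and in particular its non-trivial PSNE—are unchanged, and the yes/no answer is preserved. Consequently any algorithm deciding non-trivial-PSNE existence for homogeneous BNPGs also decides it for fully-homogeneous BNPGs. Since the latter is NP-hard by Theorem~\ref{th:fully-homo-hard}, so is the former, which completes the reduction and yields Theorem~\ref{th:homo-hard}.

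The real content is therefore deferred to Theorem~\ref{th:fully-homo-hard}, and I expect the hardness of that fully-homogeneous problem to be the main obstacle. The heterogeneous reduction (Theorem~\ref{thm-hard}) crucially assigned \emph{different} utilities to the two roles: each candidate node $i \in \V'$ invests iff $\InvNum{i}{\ActionVec}=0$, while the selector node $t$ invests iff $0 < \InvNum{t}{\ActionVec} < k$—a non-monotone ``band'' condition realized by a specially shaped $\g{t}$. Under full homogeneity neither the cost nor the externality function can distinguish players, so every such behavioral difference must instead be encoded through the graph topology. I would therefore anticipate that a direct proof (plausibly again from \textsc{Independent Set} or a degree-constrained variant) requires gadget subgraphs whose local neighbor counts place each player on the correct side of the single shared threshold $\Dgh{\cdot}\,\diamond\,c$, with the target count $k$ enforced by connectivity rather than by a tailored utility. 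The delicate part will be proving that these gadgets admit a non-trivial PSNE \emph{exactly} when the source instance is a yes-instance and never otherwise; once Theorem~\ref{th:fully-homo-hard} is in hand, the containment step above delivers the homogeneous case for free.
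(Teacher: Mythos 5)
Your proposal is correct and matches the paper's own treatment: the paper gives no separate proof of Theorem~\ref{th:homo-hard}, stating only that it ``is implied from Theorem~\ref{th:fully-homo-hard}'', which is exactly your containment argument that a fully homogeneous BNPG is a special case of a homogeneous one (plus the routine NP-membership check). Your closing speculation about how Theorem~\ref{th:fully-homo-hard} itself is proved is not needed for this statement, but it is in the right spirit---the paper's cited proof reduces from \textsc{3-Regular Induced Subgraph}, a degree-constrained problem of the kind you anticipated.
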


Next, we consider further restrictions that enable positive algorithmic results.
In particular, we consider two cases:  1) the underlying graph \G is a complete graph and  1) in addition to sharing the same externality function, the players share the same cost $c$.

\subsection{Games on Complete Graphs}

\subsubsection{Computing a PSNE} 

Given an instance \BNPG{\G} where \G is a complete graph, we show that there always exists a PSNE and it can be computed in $O(n \log n)$ time.
Recall from Proposition~\ref{P:psne}, it takes $O(n^2)$ time to compute (or conclude no one exists) a PSNE of a heterogeneous $\BNPG{\G}$ with a complete \G; the homogeneity improves the algorithm to $O(n \log n)$.

\begin{theorem}\label{th:existence}
	Every homogeneous BNPG on a complete graph has a PSNE, which can be computed in  $O(n \log n)$ time.
\end{theorem}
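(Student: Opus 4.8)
The plan is to exploit the fact that on a complete graph a homogeneous game is essentially one-dimensional: since every player shares the same externality function $g$ and is adjacent to all others, the only feature of a profile that matters is the total number $k$ of investors. First I would observe that if exactly $k$ players invest, then each investor sees $k-1$ investing neighbors and each non-investor sees $k$. Plugging into the best-response characterization Eq.~\eqref{eqn-condition-2}, an investor $i$ is best-responding iff $\Dgh{k-1} \ge \Cost{i}$ and a non-investor $i$ is best-responding iff $\Dgh{k} \le \Cost{i}$. Because these thresholds are identical across all investors (resp. non-investors), it is optimal to let the cheapest players be the investors. So I would sort the costs as $c_{(1)} \le \dots \le c_{(n)}$ and, for each $k \in \SET{0,\dots,n}$, define the candidate profile $\ActionVec^{(k)}$ in which the $k$ cheapest players invest. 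A short argument then shows $\ActionVec^{(k)}$ is a PSNE iff the two inequalities $(\mathrm{A}_k)$: $c_{(k)} \le \Dgh{k-1}$ and $(\mathrm{B}_k)$: $c_{(k+1)} \ge \Dgh{k}$ hold, read vacuously at the boundaries $k=0$ and $k=n$. This is the homogeneous specialization of Proposition~\ref{P:psne}, but homogeneity collapses the per-$k$ set computations into just these two scalar tests.

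The heart of the proof is the existence claim, which I would establish by a single extremal choice rather than a sweep. Let $k^\ast = \max\Set{k \in \SET{0,\dots,n}}{(\mathrm{A}_k) \text{ holds}}$, where $(\mathrm{A}_0)$ is taken to hold vacuously so that $k^\ast$ is well-defined. I claim $\ActionVec^{(k^\ast)}$ is a PSNE. Condition $(\mathrm{A}_{k^\ast})$ holds by the definition of $k^\ast$, so every investor is best-responding. For $(\mathrm{B}_{k^\ast})$: if $k^\ast = n$ it is vacuous (the all-invest trivial PSNE), so suppose $k^\ast < n$ and, for contradiction, that $(\mathrm{B}_{k^\ast})$ fails, i.e. $c_{(k^\ast+1)} < \Dgh{k^\ast}$. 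Since $\Dgh{k^\ast} = \Dgh{(k^\ast+1)-1}$, this gives $c_{(k^\ast+1)} \le \Dgh{(k^\ast+1)-1}$, i.e. $(\mathrm{A}_{k^\ast+1})$ holds, contradicting the maximality of $k^\ast$. Hence $(\mathrm{B}_{k^\ast})$ holds and every non-investor is best-responding, so $\ActionVec^{(k^\ast)}$ is a PSNE; the cases $k^\ast = 0$ and $k^\ast = n$ recover the two trivial PSNE.

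For the running time I would note that sorting the costs takes $O(n \log n)$, each value $\Dgh{k-1}, \Dgh{k}$ is computed in $O(1)$ from the stored $O(n)$ representation of $g$, and scanning $k$ from $1$ to $n$ to locate $k^\ast$ and emit $\ActionVec^{(k^\ast)}$ takes $O(n)$. The total is $O(n \log n)$, dominated by the sort, improving on the $O(n^2)$ bound implied by Proposition~\ref{P:psne} for the general complete-graph case.

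The step I expect to be the main obstacle is the existence argument, precisely because $g$ is assumed only non-decreasing, not concave or convex, so the finite difference $\Dgh{k}$ need not be monotone in $k$. Consequently the conditions $(\mathrm{A}_k)$ are \emph{not} threshold-like in $k$, and one cannot argue existence via a single crossing point or an intermediate-value-style sweep; the maximal-$k^\ast$ trick is what sidesteps this non-monotonicity, and the only delicate part is handling the two boundary indices correctly. By contrast, the per-$k$ characterization via $(\mathrm{A}_k)$ and $(\mathrm{B}_k)$ and the complexity bookkeeping are routine.
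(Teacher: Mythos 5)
Your proposal is correct and follows essentially the same route as the paper's proof: sort the costs, restrict attention to prefix profiles in which the $k$ cheapest players invest, and use the key observation that failure of the investor condition at $k+1$ (your $(\mathrm{A}_{k+1})$) is precisely the non-investor condition at $k$ (your $(\mathrm{B}_k)$). The only cosmetic difference is that the paper's \textsc{SimpleSort} makes a forward greedy pass and stops at the \emph{first} index $m$ where the investor condition fails (outputting $k=m-1$), whereas you take the \emph{last} index where it holds; both extremal choices are certified by the same pair of inequalities and yield the same $O(n \log n)$ bound dominated by sorting.
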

\begin{proof}
The proof is constructive.
Without loss of generality, suppose $c_1 \leq c_2 \leq \cdots \leq c_n$ (we can start by sorting players in this order).
If $\Delta g(0) < c_1$, we immediately obtain a PSNE, i.e., no player invests; this is because for $i=1,\ldots, \NumAgent$ we have $\Util{i}(0, n_i) > \Util{i}(1, n_i)$ where $n_i=0$.
If $\Delta g(n-1) > c_\NumAgent$, the action profile that all players invest is a PSNE, since for $i=1, \ldots, \NumAgent$ we have $\Util{i}(1, n_i) > \Util{i}(0, n_i)$ where $n_i = \NumAgent-1$.
If neither of the  two cases holds, we initialize $\bm{x}$ to all-zeros. Since $c_1\leq \Delta g(0)$, we set $x_1 = 1$.  For a subsequent player $i$, we set $x_i = 1$ if $c_i \leq \Delta g(n_i)$. Note that at this point $n_i$ is equal to the number of players with index smaller than $i$. We repeat this process for each player in ascending order of $c_i$ until we found a certain player $m$ such that $c_m> \Delta g(n_m)$. We then set the decisions of player $m$ and the subsequent players to $0$. 
We show that the resulting action profile is a PSNE.

First, there are $m-1$ investing players in the PSNE.
For player $m-1$, we have $c_{m-1} \leq  \Delta g(n_{m-1})$ by construction, so player $m-1$ does not deviate from investing (also note that $n_{m-1} = m-2$).
For player $m$, we have $c_m> \Delta g(n_{m})$ by construction, so player $m$ keeps not investing (also note that $n_m = m-1$).
For any player $i$ with $i < m-1$,  $n_i$ is equal to $(i-1) + (m-1-i) = m-2$.
The player does not deviate from investing since $c_i \le c_{m-1} \leq \Delta g(n_{m-1}) = \Delta g(m-2) = \Delta g(n_i)$.
For any player $j$ with $j > m$, $n_j$ is equal to $m-1$.
The player keeps not investing since $c_j \ge c_m > \Delta g(n_m) = \Delta g(m-1) = \Delta g(n_j)$.
The time complexity of finding a PSNE is dominated by the $O(n \log n)$ sorting of $\Cost{i}$.
\end{proof}

\begin{algorithm}[ht]
	\caption{\textsc{SimpleSort}}\label{algo:simplesort}
	\begin{algorithmic}[1]
		\State \textbf{Input}: a homogeneous \BNPG{\G} with externality function $g$ and cost parameters $\Cost{1}, \ldots, \Cost{\NumAgent}$.
		\State Sort the costs such that $\Cost{1} \le \cdots \le \Cost{\NumAgent}$
		\If{$\Delta g(0) < \Cost{1}$}
		    \State return $\bm{x}=\bm{0}$ as a PSNE
		\EndIf
		\If{$\Delta g(\NumAgent-1) > \Cost{\NumAgent}$}
		    \State return $\bm{x}=\bm{1}$ as a PSNE
		\EndIf
		\State Initialize $\bm{x}=\bm{0}, n=0$
		\For{$i=1,\ldots, \NumAgent$}
		    \If{$\Cost{i} \le \Delta g(n)$}
		        \State $x_i = 1$
		        \State $n = n + 1$
		    \Else
		        \State $x_i = x_{i+1}, \ldots x_{\NumAgent} = 0$
		        \State break
		    \EndIf
		\EndFor
	    \State return $\bm{x}$
	\end{algorithmic}
\end{algorithm}

We summarize the proof of Theorem~\ref{th:existence} as in Algorithm~\ref{algo:simplesort}.
The algorithm outputs a PSNE with a special structure: the first $k$ players (in ascending order of $c_i$)  invest and the rest do not invest, for some $k$ determined by the algorithm. 
Notice that the output PSNE may not be unique, as some of the first $k$ players may be indifferent between investing and not investing.

\subsubsection{Computing Socially Optimal PSNE}
Given an action profile $\bm{x}$, the social welfare (SW) is defined as the sum of all players' utilities: 
\begin{equation}\label{eq:sw}
\text{SW}(\bm{x}) := \sum_{i=1}^{\NumAgent}{\Util{i}(x_i, n_i)}.
\end{equation}
A common goal is to seek the PSNE that maximizes social welfare---a \emph{socially optimal} PSNE. 
Although the \textsc{SimpleSort} algorithm (i.e., Algorithm~\ref{algo:simplesort}) always finds a PSNE, it is not necessarily a socially optimal one.
We show in Theorem~\ref{th:social-opt} that the socially optimal PSNE can be efficiently computed.

\begin{theorem}\label{th:social-opt}
    Consider a homogeneous \BNPG{\G} where \G is a complete graph, it takes polynomial time to compute the socially optimal PSNE.
\end{theorem}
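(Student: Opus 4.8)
The plan is to exploit the simple form that social welfare takes on a complete graph once the number of investors is fixed. If exactly $k$ players invest, then for every player $i$ the number of investing neighbors is $\InvNum{i}{\ActionVec} = k - \Action{i}$, so the externality $g(\Action{i} + \InvNum{i}{\ActionVec}) = g(k)$ is identical across all players and independent of $\Action{i}$. Summing the utilities in Eq.~\eqref{eq:sw} thus gives
\begin{equation*}
\text{SW}(\ActionVec) = n\,g(k) - \sum_{i : \Action{i} = 1} \Cost{i}.
\end{equation*}
Since the term $n\,g(k)$ depends only on $k$, among all $k$-PSNE the socially optimal one is precisely the one that minimizes the total cost $\sum_{i:\Action{i}=1}\Cost{i}$ of the investing set.

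First I would iterate over every candidate $k \in \SET{0, 1, \ldots, n}$, handling the trivial cases $k = 0$ and $k = n$ directly. For a fixed $k$, the analysis underlying Proposition~\ref{prop-PSNE-1} (and Proposition~\ref{P:psne}) partitions the players by cost into three classes: forced investors with $\Cost{i} < \Dgh{k}$, forced non-investors with $\Cost{i} > \Dgh{k-1}$, and flexible players with $\Dgh{k} \le \Cost{i} \le \Dgh{k-1}$ who may take either action; a $k$-PSNE exists iff no player satisfies $\Dgh{k-1} < \Cost{i} < \Dgh{k}$, the number $p$ of forced investors is at most $k$, and the number of forced non-investors is at most $n - k$. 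Because the externality value is already pinned down by $k$, minimizing the investing cost becomes a pure selection problem: include every forced investor, and then complete the investing set to size $k$ using the cheapest flexible players. Every flexible player has cost in $[\Dgh{k}, \Dgh{k-1}]$, and any size-$(k - p)$ subset of flexible players produces a valid $k$-PSNE, so choosing the $k - p$ cheapest flexible players is both feasible and cost-minimizing.

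Concretely I would sort the costs once and precompute prefix sums. For each $k$ the forced-investor prefix, the forced-non-investor suffix, and the cheapest flexible players can then be located by binary search and the associated minimum investing cost read off in $O(\log n)$ time; plugging this into $\text{SW}(k) = n\,g(k) - (\text{min investing cost at } k)$ and taking the maximum over all feasible $k$ yields the socially optimal PSNE. The dominant cost is the initial sort, giving $O(n \log n)$ overall, though a coarser $O(n)$-per-$k$ bookkeeping already establishes the claimed polynomial bound.

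The step I expect to require the most care is verifying that the greedy min-cost selection never conflicts with the forced assignments and always yields a genuine equilibrium for the chosen $k$. This holds because the three cost classes are determined solely by the thresholds $\Dgh{k}$ and $\Dgh{k-1}$ and are mutually consistent once the no-stuck-player condition is met: a forced investor can always afford to invest, a forced non-investor is content not to, and any flexible player is in equilibrium on either side, so the only remaining constraint is the size requirement. The comparison $\Dgh{k} > \Dgh{k-1}$ versus $\Dgh{k} \le \Dgh{k-1}$ needs a brief remark, since in the former case the flexible class is empty and the existence test is nontrivial, while in the latter it is automatically passed; both, however, are handled uniformly by the same threshold comparison.
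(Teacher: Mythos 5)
Your proposal is correct and follows essentially the same route as the paper's proof: fix the number $k$ of investors, use the complete-graph identity $\Action{i} + n_i = k$ to reduce social welfare to $n\,g(k) - \sum_{i:\Action{i}=1}\Cost{i}$, invoke the $k$-PSNE characterizations (Propositions~\ref{P:psne} and~\ref{prop-PSNE-1}) to determine feasibility, and greedily select the cheapest eligible investors for each feasible $k$. The only differences are cosmetic: you handle the cases $\Dgh{k} > \Dgh{k-1}$ and $\Dgh{k} \le \Dgh{k-1}$ uniformly via the three cost classes (the paper splits them, using uniqueness in the first case), and you sharpen the bookkeeping to $O(n\log n)$, which the paper does not attempt.
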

\begin{proof}
It is direct to check if the game has any trivial PSNE and if so to compute the corresponding SW.
For a particular $0 < k < \NumAgent$, we call a PSNE with $k$ investing players as a $k$-PSNE.
Given a $k$-PSNE, we know that $\Action{i} + n_i = k$ for all $i$ since the underlying graph is complete.
The social welfare of a $k$-PSNE is re-written as follows
    \begin{equation}\label{eq:sw-k}
        \text{SW}(\ActionVec; k) = n \cdot g(k) - \sum_{i=1}^n  c_i \Action{i}.
    \end{equation} 
Next, we describe how to compute the socially optimal $k$-PSNE (if it exists): 
\begin{itemize}
    \item Case 1: When $\Delta g(k) > \Delta g(k-1)$,  if there is a $\Cost{i}$ such that $\Delta g(k-1) < \Cost{i} < \Delta g(k)$ then the $k$-PSNE does not exist.
    Otherwise, let $\I{+}(k) = \Set{i}{\Cost{i} \le \Delta g(k-1)}$.
    By Proposition~\ref{P:always}, the $k$-PSNE exists and is unique if and only if $\SetCard{\I{+}(k)}=k$.
    If the $k$-PSNE exists, it is the socially optimal $k$-PSNE.
    
    \item Case 2: When $\Delta g(k) \leq \Delta g(k-1)$, let $\I{\times}(k):=\V \setminus \I{-}(k)$, i.e., the players that either always invest or are indifferent between investing and not investing.
    The $k$-PSNE exists if and only if $\SetCard{\I{\times}(k)} \ge k$.
    When the $k$-PSNE exists, the social welfare is maximized if we choose the $k$ investing players as those with the smallest \Cost{i}, which can be identified by a simple sorting of $\I{\times}(k)$ with respect to \Cost{i}.
\end{itemize}
In summary, an algorithm to compute a socially optimal PSNE is as follows:
we compute the social welfare of each $k$-PSNE (if it exists) from $k=1$ to $k=n-1$.
Then,if the game has any trivial PSNE, we compute the corresponding social welfare.
The socially optimal PSNE is obtained by comparing the above PSNEs with respect to their SW.
\end{proof}

\subsection{Fully Homogeneous BNPG}

A homogeneous BNPG is called fully homogeneous if the players share the same cost $c$.
Further, we restrict the function $g$ to be strictly convex; as a result, a PSNE always exists and can be efficiently computed.
The convexity of $g$ is essential for the positive results; without the convexity checking if a fully homogeneous BNPG has a non-trivial PSNE is NP-complete.

\subsubsection{Strictly Convex $g$}
Given a fully homogeneous $\BNPG{\G}$,  let \dMax be the maximum degree of the underlying graph \G.
Since $g$ is strictly convex, the first-order derivative is strictly increasing, i.e., $\Delta g(0) < \Delta g(1) < \cdots < \Delta g(\dMax)$. 
The following proposition characterizes the extreme case where a trivial PSNE is the only Nash equilibrium:
    \begin{proposition}\label{prop:fully-and-convex}
        If $c < \Delta g(0)$ (resp. $c > \Delta g(\dMax)$), the action profile that all players invest (resp. no one invests) is the unique PSNE.
    \end{proposition}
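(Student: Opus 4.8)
The plan is to exploit the fact that, under strict convexity, the marginal gain $\Dgh{\cdot}$ is strictly increasing (as already noted, $\Dgh{0} < \Dgh{1} < \cdots < \Dgh{\dMax}$), so that the incentive condition in Eq.~\eqref{eqn-condition-2} has the same sign for \emph{every} player in \emph{every} profile. The key structural observation is that a player $i$'s decision depends only on $n_i$, the number of her investing neighbors, and this quantity always lies in $\SET{0, 1, \ldots, \deg(i)}$ with $\deg(i) \le \dMax$. Hence it suffices to bound $\Dgh{n_i}$ uniformly over this range and compare it to $c$.

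For the first case, suppose $c < \Dgh{0}$. First I would note that, since $\Dgh{\cdot}$ is increasing and $n_i \ge 0$, we have $\Dgh{n_i} \ge \Dgh{0} > c$ for every player $i$ and every action profile. By Eq.~\eqref{eqn-condition-2} this means investing is a \emph{strictly} dominant action for each player: the utility difference $U_i(1, n_i) - U_i(0, n_i) = \Dgh{n_i} - c$ is strictly positive regardless of what the neighbors do. Consequently the all-invest profile $\bm{1}$ is a PSNE, since no player can profitably deviate. For uniqueness, I would argue that in any profile in which some player $i$ chooses $\Action{i} = 0$, that player strictly prefers to switch to $\Action{i} = 1$, so the profile cannot be a PSNE; thus $\bm{1}$ is the only one.

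The second case, $c > \Dgh{\dMax}$, is symmetric. Here I would use $n_i \le \deg(i) \le \dMax$ together with monotonicity to get $\Dgh{n_i} \le \Dgh{\dMax} < c$ for every player and every profile, so that not investing is strictly dominant; the all-zero profile $\bm{0}$ is then the unique PSNE by the same deviation argument.

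There is no genuine obstacle in this argument; the only points requiring care are (i) evaluating the marginal $\Dgh{\cdot}$ over the correct domain $\SET{0,\ldots,\deg(i)}$ and bounding it by $\Dgh{0}$ from below or $\Dgh{\dMax}$ from above, and (ii) observing that the \emph{strict} inequalities in the hypotheses are exactly what upgrades ``is an equilibrium'' to ``is the unique equilibrium,'' since any deviating player strictly gains. I note in passing that only monotonicity of $\Dgh{\cdot}$ (i.e., convexity of $g$) is actually invoked, which is implied by the standing strict-convexity assumption of this subsection.
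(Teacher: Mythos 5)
Your proof is correct and follows essentially the same approach as the paper: both exploit monotonicity of $\Dgh{\cdot}$ to bound $\Dgh{n_i}$ below by $\Dgh{0}$ (resp.\ above by $\Dgh{\dMax}$), making one action strictly preferred for every player in every profile. If anything, your write-up is slightly more complete, since you explicitly verify that the dominant-action profile is itself a PSNE, a step the paper's contradiction argument leaves implicit.
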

    \begin{proof}
        Consider $c < \Delta g(0)$. 
        Suppose there is an action profile that some player $i$ does not invest; this implies that $c \ge \Delta g(n_i)$ with $n_i \ge 0$; it follows that $c \ge \Delta g(n_i) \ge \Delta g(0)$, which leads to a contradiction.
        Next, consider $c > \Delta g(\dMax)$.
        Suppose there is an action profile that some player $i$ invests, which implies that $c \le \Delta g(n_i)$.
        As $n_i$ is at most \dMax, it follows that $c \le \Delta g(\dMax)$, which leads to a contradiction.
    \end{proof}

Next, we connect non-trivial PSNEs of the game with $k$-cores of the underlying graph \G.
Specifically, a $k$-core of a graph is a \emph{maximal} induced subgraph such that each node in the subgraph has degree at least $k$; we follow the definition as in~\cite{manshadi2009supermodular} (Definition 4.1), which does not require a $k$-core to be connected.

\begin{theorem}
	\label{thm-fully-homo-PSNE}
	Given a fully homogeneous BNPG  with strictly convex $g$. 
	Suppose  $\ActionVec \notin \SET{\bm{0}, \bm{1}}$.
	Let $\mathcal{H}$ be the induced subgraph over the investing players.
	\ActionVec is a non-trivial PSNE if and only if $\mathcal{H}$ is a $k$-core of the underlying graph with $k = \Delta g^{-1}(c) \ge 0$. 
\end{theorem}

\begin{proof}
    ($\Rightarrow$)
	Let $\mathcal{I}_1$ be the (non-empty) set of investing players, i.e., $\mathcal{I}_1=\Set{i}{ \Action{i} = 1}$.
	For any player $i \in \mathcal{I}_1$, it follows that $c \le \Delta g(n_i)$, or equivalently, $\Delta g^{-1}(c) \le n_i$, where $\Delta g^{-1}(\cdot)$ is the inverse function of $\Delta g$; it is well-defined due to the monotonicity of $\Delta g$. 
	As $\mathcal{H}$ is the induced subgraph on the investing players, the degree  of node $i \in \mathcal{H}$ satisfies that $d_{\mathcal{H}}(i)=n_i \ge \Delta g^{-1}(c)$.
	Let $k = \Delta g^{-1}(c)$.
	We show that $k \ge 0$: suppose $k < 0$, it follows that $\Delta g^{-1}(c) < 0$ and hence $c < \Delta g(0)$.
	By Proposition~\ref{prop:fully-and-convex}, $c < \Delta g(0)$ implies that the only PSNE is a trivial one, which contradicts that \ActionVec is non-trivial.
	
	($\Leftarrow$)
	Consider the action profile $\bm{x}$ where the players in $\mathcal{H}$ invest while the others do not invest. 
	We show that $\bm{x}$ is a PSNE.
	For any player $i \in \mathcal{H}$, its degree $d_\mathcal{H}(i) = n_i \ge \Delta g^{-1}(c)$; thus, we have $\Delta g(n_i) \ge c$ and the player does not deviate from investing.
	For any player $j \notin \mathcal{H}$, its degree is less than $k$; otherwise it would be in $\mathcal{H}$; this implies that $n_j \le d_\G(j) < k = \Delta g^{-1}(c)$; it follows that $c > \Delta g(n_j)$, i.e., the player keeps not investing.
\end{proof}

As $\Delta g^{-1}(c)$ is readily computable, we apply a simple pruning algorithm to find the $k$-core with $k=\Delta g^{-1}(c)$ in polynomial time (or conclude no one exists); this implies that we can find a non-trivial PSNE (or conclude no one exists) in polynomial time.
The pruning algorithm iteratively removes the nodes with degree less than $k$ until no more node can be removed. 
If the remaining graph is not empty, it is the $k$-core and the node set consists of the investing players; otherwise, the $k$-core does not exist and the game does not have any non-trivial PSNE.
It is clearly that the pruning algorithm runs in polynomial time.

\begin{corollary}
	Given a fully homogeneous BNPG with strictly convex $g$.
	When $\Delta g(0) \le c \le \Delta g(\dMax)$, a trivial PSNE (i.e., $\ActionVec = \bm{0}$) always exists. Besides, it takes polynomial time to find a non-trivial PSNE or conclude that no one exists. 
\end{corollary}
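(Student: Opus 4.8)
The plan is to establish the final corollary by combining Proposition~\ref{prop:fully-and-convex} with the $k$-core characterization from Theorem~\ref{thm-fully-homo-PSNE} and its preceding corollary. The statement has two independent claims: first, that when $\Delta g(0) \le c \le \Delta g(\dMax)$ the all-zeros profile $\ActionVec = \bm{0}$ is always a PSNE; second, that a non-trivial PSNE can be found (or ruled out) in polynomial time. I would treat these separately.

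\textbf{The trivial PSNE.} For the first claim, I would verify directly that $\ActionVec = \bm{0}$ satisfies the equilibrium condition. When no player invests, every player $i$ has $n_i = \InvNum{i}{\ActionVec} = 0$, so the relevant deviation test is whether player $i$ strictly prefers to invest, i.e., whether $\Dgh{0} \ge c$. Since we are given $c \ge \Delta g(0)$, the inequality $\Dgh{0} \le c$ holds, meaning no player has incentive to switch from not investing to investing. Hence $\bm{0}$ is a PSNE. Note this argument only uses the lower bound $c \ge \Delta g(0)$; the upper bound $c \le \Delta g(\dMax)$ is the complementary hypothesis ruling out the regime of Proposition~\ref{prop:fully-and-convex} where $\bm{0}$ would fail to be an equilibrium (namely $c < \Delta g(0)$), so the stated two-sided condition is exactly the range where $\bm{0}$ is guaranteed to be a valid trivial equilibrium.

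\textbf{The non-trivial PSNE.} For the second claim, I would invoke the corollary immediately following Theorem~\ref{thm-fully-homo-PSNE}: any non-trivial PSNE induces a $k$-core on the investing players with $k = \Delta g^{-1}(c) \ge 0$. Because $g$ is strictly convex, $\Delta g$ is strictly increasing and hence invertible, so $k = \Delta g^{-1}(c)$ is a well-defined, readily computable integer threshold that does not depend on the identity of the investing players. I would then describe the standard $k$-core pruning procedure already sketched in the excerpt: iteratively delete every node of current degree less than $k$ until the graph stabilizes. If a non-empty subgraph survives, it is the (unique, maximal) $k$-core, and setting exactly those nodes to invest yields a non-trivial PSNE—each surviving node has degree $\ge k = \Delta g^{-1}(c)$, so $\Dg{}{n_i} \ge c$ and it does not deviate, while each pruned node has fewer than $k$ investing neighbors and correctly abstains. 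If pruning empties the graph, no $k$-core exists and by the corollary no non-trivial PSNE can exist. Correctness in both directions therefore reduces to the already-proven $k$-core characterization.

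\textbf{The main obstacle} is largely bookkeeping rather than conceptual: I must confirm that the $k$-core the pruning algorithm returns actually constitutes a PSNE, which requires checking the non-investing (pruned) players as well as the investing (surviving) ones. The forward direction (Theorem~\ref{thm-fully-homo-PSNE}) only shows investing players induce a $k$-core; the converse—that populating the $k$-core gives an equilibrium—needs the complementary check that every pruned node genuinely prefers not to invest, i.e., that its number of investing neighbors stays strictly below $\Delta g^{-1}(c)$. This holds because pruning only removes nodes whose degree within the surviving set is less than $k$, so such a node's investing-neighbor count $n_i < k = \Delta g^{-1}(c)$ gives $\Dg{}{n_i} < c$ by monotonicity of $\Delta g$, confirming it abstains. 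The running time is immediate since pruning removes each node at most once, so the whole procedure is polynomial.
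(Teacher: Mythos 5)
Your proof is correct and takes essentially the same route as the paper's: the trivial PSNE is verified directly from $c \ge \Delta g(0)$ with $n_i = 0$ for all players, and the non-trivial case is dispatched by the $k$-core pruning algorithm whose correctness rests on Theorem~\ref{thm-fully-homo-PSNE}. The only difference is one of completeness, not of approach: you spell out the converse check that the paper leaves implicit in ``direct from the discussion above,'' namely that every pruned node has fewer than $k = \Delta g^{-1}(c)$ neighbors in the surviving core and hence, by monotonicity of $\Delta g$, strictly prefers not to invest.
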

\begin{proof}
    $\ActionVec = \bm{0}$ implies that $n_i = 0$ for all $i$; the action profile is a PSNE since $\Util{i}(0, n_i) \ge \Util{i}(1, n_i)$ for all $i$.
    The remaining statement is direct to show from the discussion above.
\end{proof}

Finally, Theorem~\ref{th:fully-homo-hard} shows that the convexity of $g$ is necessary to
efficiently checking the existence of a non-trivial PSNE; intuitively, this is because the convexity leads to the monotonicity of $\Delta g$, which is essential to characterize the non-trivial PSNE by the $k$-core.

\begin{theorem}[\citet{yang2020refined}]\label{th:fully-homo-hard}
	\label{existence-non-trivial-PSNE}
	Consider a fully homogeneous BNPG with general $g$,  checking the existence of a non-trivial PSNE is NP-complete.
\end{theorem}
    

\section{Experiments}
In this section we conducted experiments to show the effectiveness of Algorithm~\ref{algo:heuristic} for finding approximate PSNE in general BNPGs.
The parameters $c_i$ are sampled from the uniform distribution on $[0, 1]$. 
In our experiments we assume $g_i = \lambda_i h_i$ where $\lambda_i$ is sampled uniformly at random on $[0,1]$, and $h_i$ are either convex or concave,
corresponding to strategic substitutes and complements, respectively~\cite{galeotti2010network}.
If $g_i$ is convex, it is sampled from a set of convex functions $ \{ -\alpha \log{(x+1)^\beta} \}$ by uniformly at random sampling $\alpha$ and $\beta$ from $\{0.1, 0.3, 0.5, 0.7, 0.9 \}$ and $\{1.2,  1.5, 2.0 \}$, respectively. Similarly, if $g_i(x)$ is concave it is sampled from a set of concave functions $\{-\alpha x^\beta \}$ by sampling $\alpha$ and $\beta$ from the same sets.  
We normalize the values of the utility functions $U_i$ to $[0,1]$ when evaluating the approximation quality of an $\epsilon$-PSNE. 
In each simulated game, we feature a mix of players with concave and convex $g_i$.
This mix is determined using a parameter $\gamma \in [0,1]$, which is the probability that a particular player $i$'s $g_i$ is concave; consequently, higher $\gamma$ implies a larger fraction of the population with convex utility functions.
For each value of $\gamma$ we simulated $1000$ BNPGs. 
We first conduct experiments on a Facebook network~\cite{leskovec2012learning}, which is a connected graph with $4093$ nodes and $88234$ edges.
Then, we experiment on synthetic networks to study the impact of network topologies on the fraction of players investing in PSNE.

\begin{figure}[h]
\centering
\setlength{\tabcolsep}{0.1pt}
\begin{tabular}{c}
\includegraphics[width=2.2in]{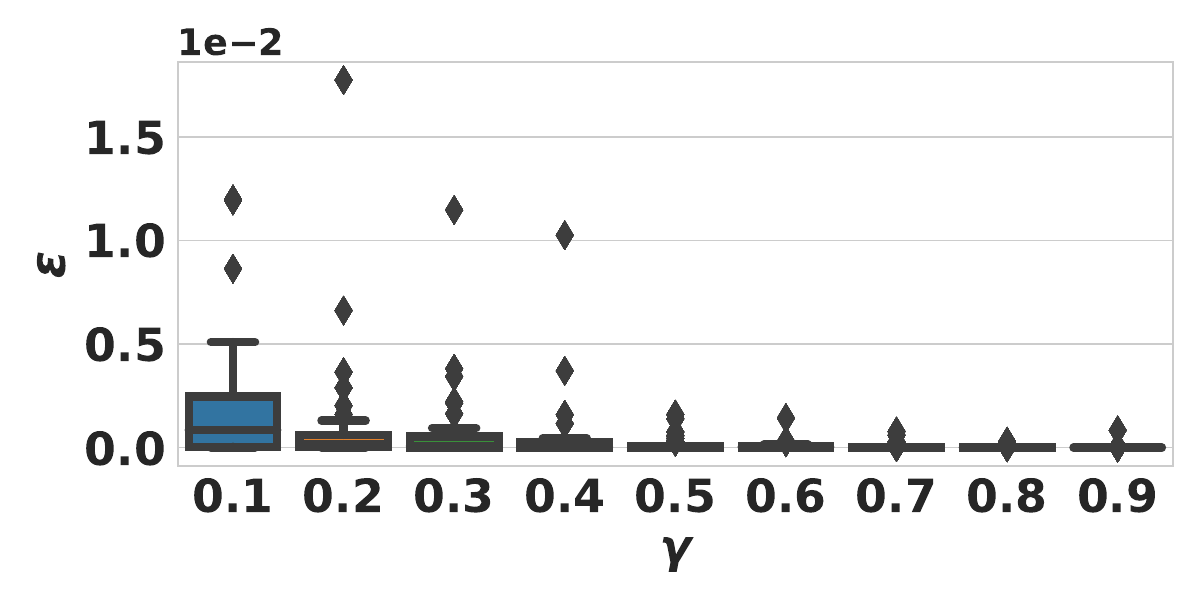}
\end{tabular}

\small \caption{Averaged $\epsilon$ of the approximate PSNE (in \% units).}
\label{fig:hete}
\end{figure}

\paragraph{Facebook Network.} The experimental results are shown in Figure~\ref{fig:hete}, where we vary $\gamma$.
Each bar is the averaged $\epsilon$ of the approximate PSNE computed using Algorithm~\ref{algo:heuristic}.
We omit the two corner cases of $\gamma=0$ and $\gamma=1$ because in all of these instances our algorithm returned an \emph{exact} PSNE.
The main takeaway from these results is two-fold: first, that even when populations are mixed so that a PSNE is not guaranteed to exist, there is usually an approximate PSNE with a small $\epsilon$ (maximum benefit from deviation), and second, that our heuristic algorithm finds good approximate PSNE (observe that even outliers have $\epsilon < 0.02$).

\begin{figure}[h]
\centering
\begin{tabular}{cc}
\includegraphics[width=0.45\columnwidth]{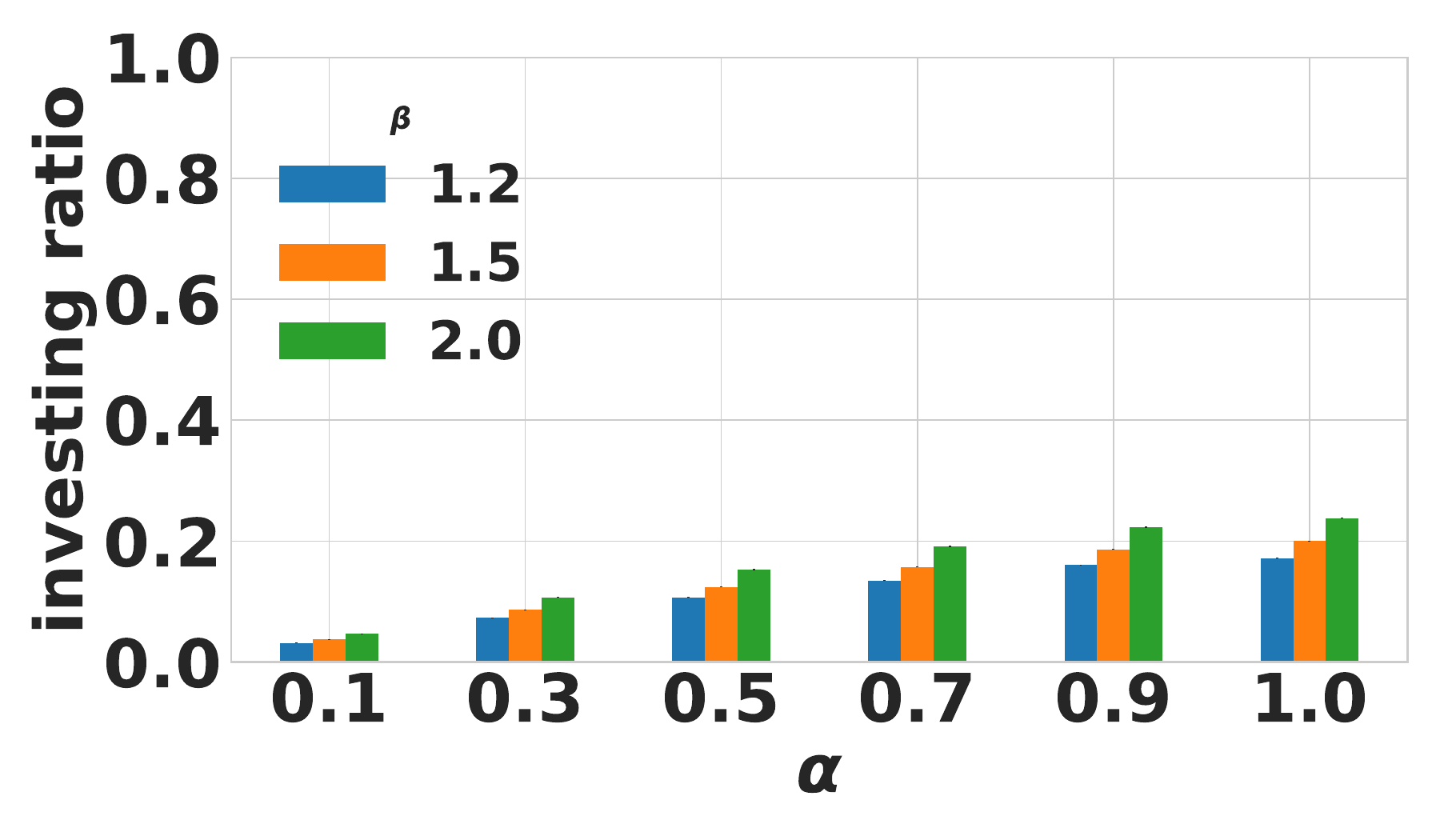} & \includegraphics[width=0.45\columnwidth]{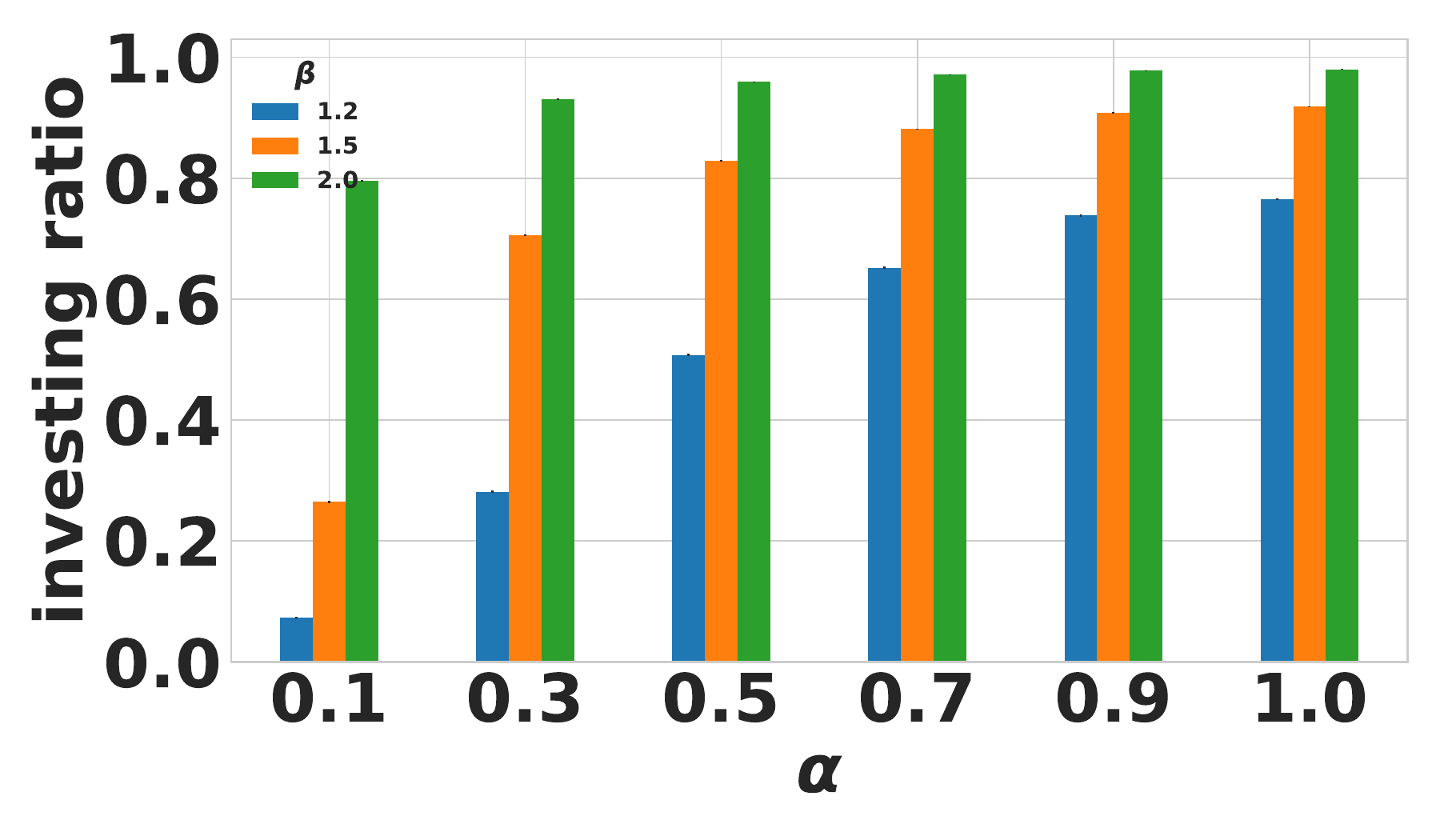} 
\end{tabular}

\small \caption{Ratio of investing players when $\gamma=0$ (left) and $\gamma = 1$ (right).}
\label{fig:investratio}
\end{figure}

Next, we study the impact of our three parameters, $\alpha$, $\beta$, and $\gamma$, on the fraction of players investing in PSNE.
Figure~\ref{fig:investratio} presents the results for $\gamma = 0$ (left) and $\gamma = 1$ (right) (these are the two cases where we can compute the exact PSNE, as mentioned above).
We observe that in both cases, increasing $\alpha$ or $\beta$ leads to increasing fraction of investing players.
This is because higher values of either increase the rate of change of the utility function as more players invest, and consequently network effects lead to higher overall investment.
The most dramatic difference, however, is between $\gamma = 0$ and $\gamma = 1$.
In the former case, a relatively small fraction ultimately invest, whereas $\gamma = 1$ leads to a great deal of equilibrium investment, particularly for sufficiently large $\alpha$ and $\beta$.
This is because convex $g_i$ imply that as more players invest, the marginal benefits to investment increase, and therefore equilibrium may be seen as a cascade of investment decisions that ultimately captures most of the community.

\paragraph{Synthetic Networks.} We conduct experiments on two types of synthetic networks. 
The first type is the Barabasi-Albert network (BA)~\cite{barabasi2009scale}. 
BA is characterized by its power-law degree distribution, where connectivity is heavily skewed towards high-degree nodes. 
The power-law degree distribution, $P(k)\sim k^{-r}$, gives the probability that a randomly selected node has $k$ neighbors. 
We consider three variants of the BA network with different exponents $r$.
We also experiment on Small-World (SW) network~\cite{watts1998collective}.
The Small-World network is well-known for balancing shortest path distance between pairs of nodes and local clustering in a way as to qualitatively resemble real networks. 
We consider three variants of the SW network, where they differ in the local clustering coefficients.
The details about BA and SW networks are listed in Table~\ref{tab:synthetic-network}.
We use the same experimental setup as the one for Facebook network. 
Note that we only consider $\gamma=0$ and $\gamma=1$, where a PSNE can always be found by Algorithm~\ref{algo:heuristic}.
When $\gamma=1$, a player has higher incentive to invest as more neighbors invest, while the incentive structure is the opposite when $\gamma=0$.


\begin{table}[h]
\centering
\small
\begin{tabular}{@{}cccc@{}}
\toprule
& BA-1     & BA-2     & BA-3     \\ \midrule
exponent $r$        & $2.7167$ & $2.2789$ & $2.0374$ \\
\cmidrule(r){1-4}
         			    &SW-1     & SW-2     & SW-3     \\ \midrule
local clustering coeff. & $0.3667$ & $0.3893$ & $0.4070$ \\ \midrule
\end{tabular}
\caption{Details about the BA and SW networks.}
\label{tab:synthetic-network}
\end{table}

The experimental results on BA networks are showed in Figure~\ref{fig:BA_investratio}. 
The columns from left to right correspond to results on BA-1, BA-2, and BA-3.
Note that as $r$ decreases it is more likely to have high-degree nodes. 
The top row (resp. bottom row) is the case where $\gamma=0$ (resp. $\gamma=1$).
At the bottom row, when $\alpha$ and $\beta$ are fixed, there are more high-degree nodes as $r$ decreases, thus the investing decisions of these high-degree nodes can encourage more nodes to invest, which results in an overall increasing trend to invest. 
The results at the top row have the opposite trend, where  more free riders occur due to the increase of high-degree nodes, which leads to an overall decreasing trend to invest. 
The experimental results for SW networks are showed in Figure~\ref{fig:Small-World_investratio}. 
The columns from left to right correspond to results on SW-1, SW-2, and SW-3.
Note that a larger local clustering coefficient leads to denser local structures.
Intuitively, when $\gamma=1$, an investing player's decision can encourage more neighbors to invest as the local structure becomes denser.
On the other hand, when $\gamma=0$, a denser local structure can also lead to more free riders.
This intuition is supported by the overall trend exhibited in Figure~\ref{fig:Small-World_investratio}.

\begin{figure}[h]
\centering
\begin{tabular}{lll}
\includegraphics[width=0.3\columnwidth]{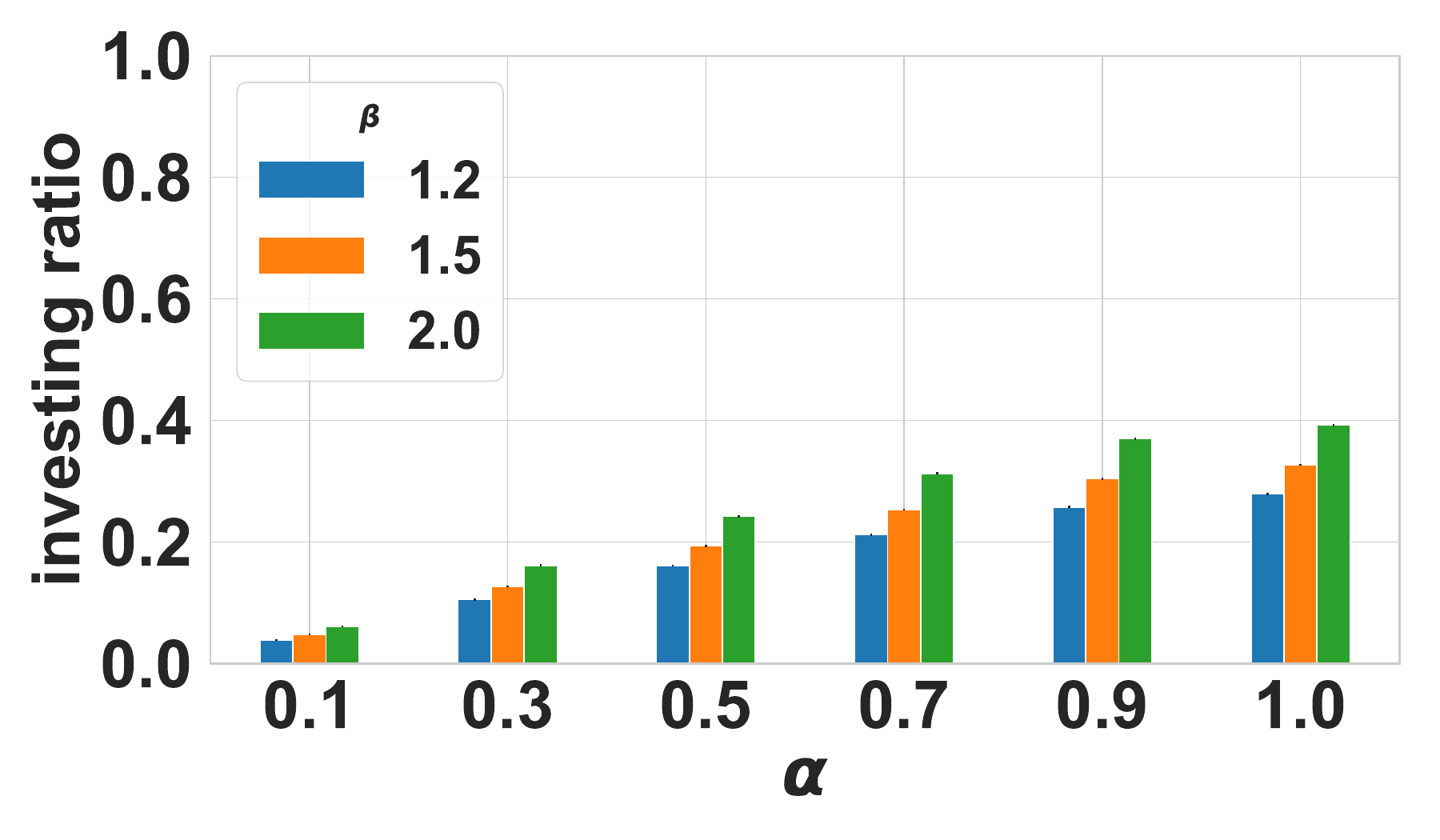} & \includegraphics[width=0.3\columnwidth]{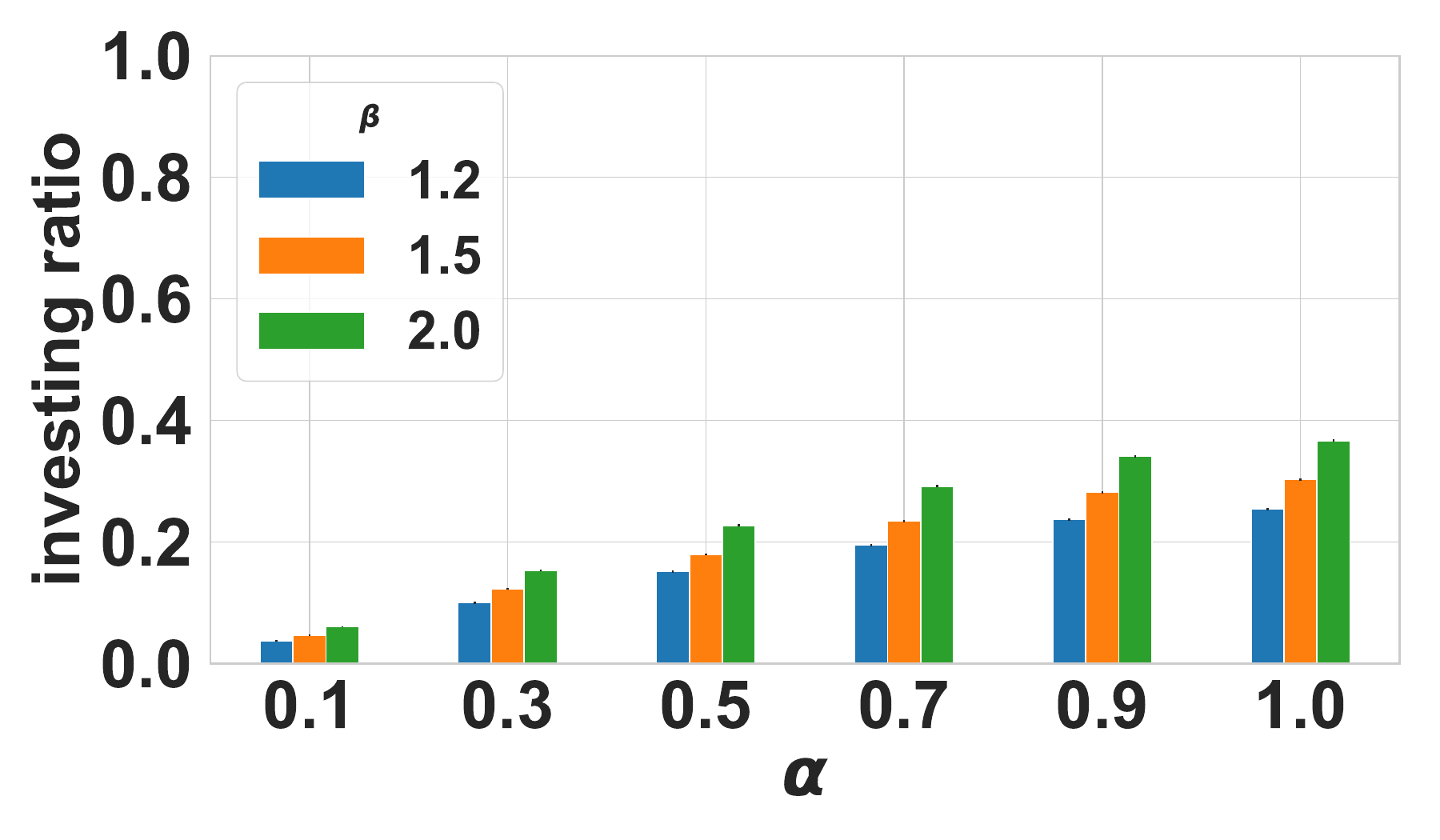} & \includegraphics[width=0.3\columnwidth]{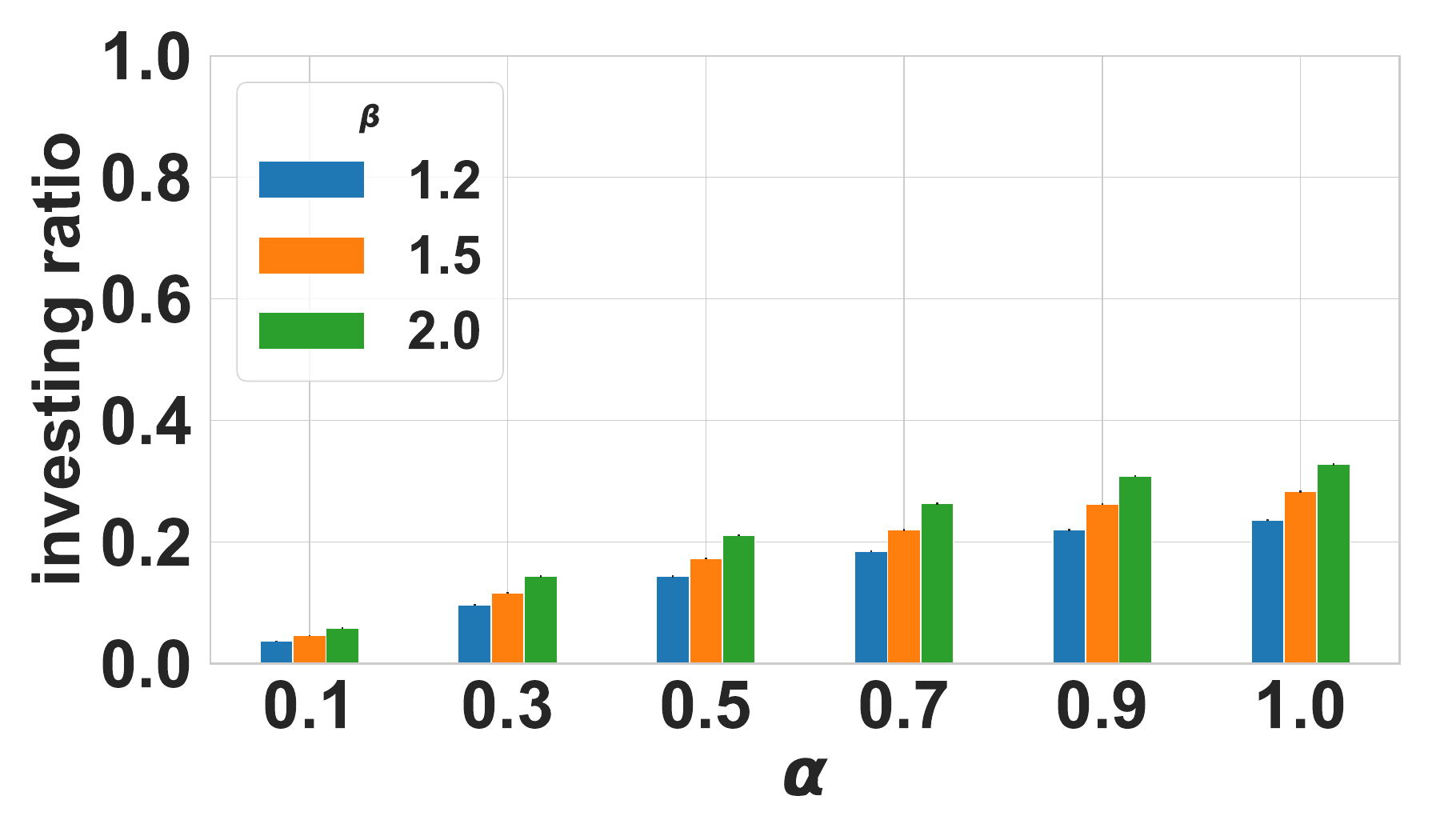} \\
\includegraphics[width=0.3\columnwidth]{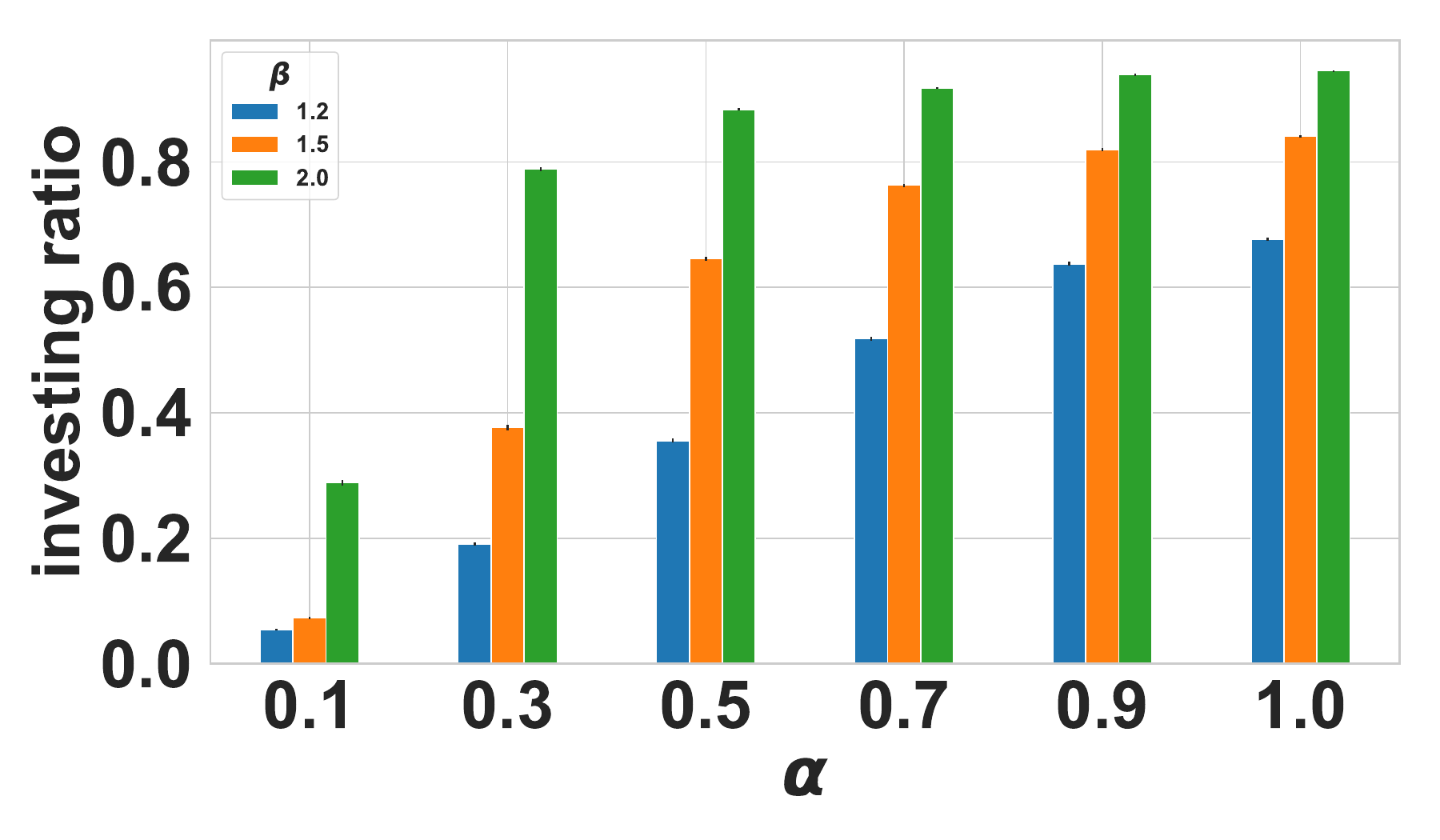} & \includegraphics[width=0.3\columnwidth]{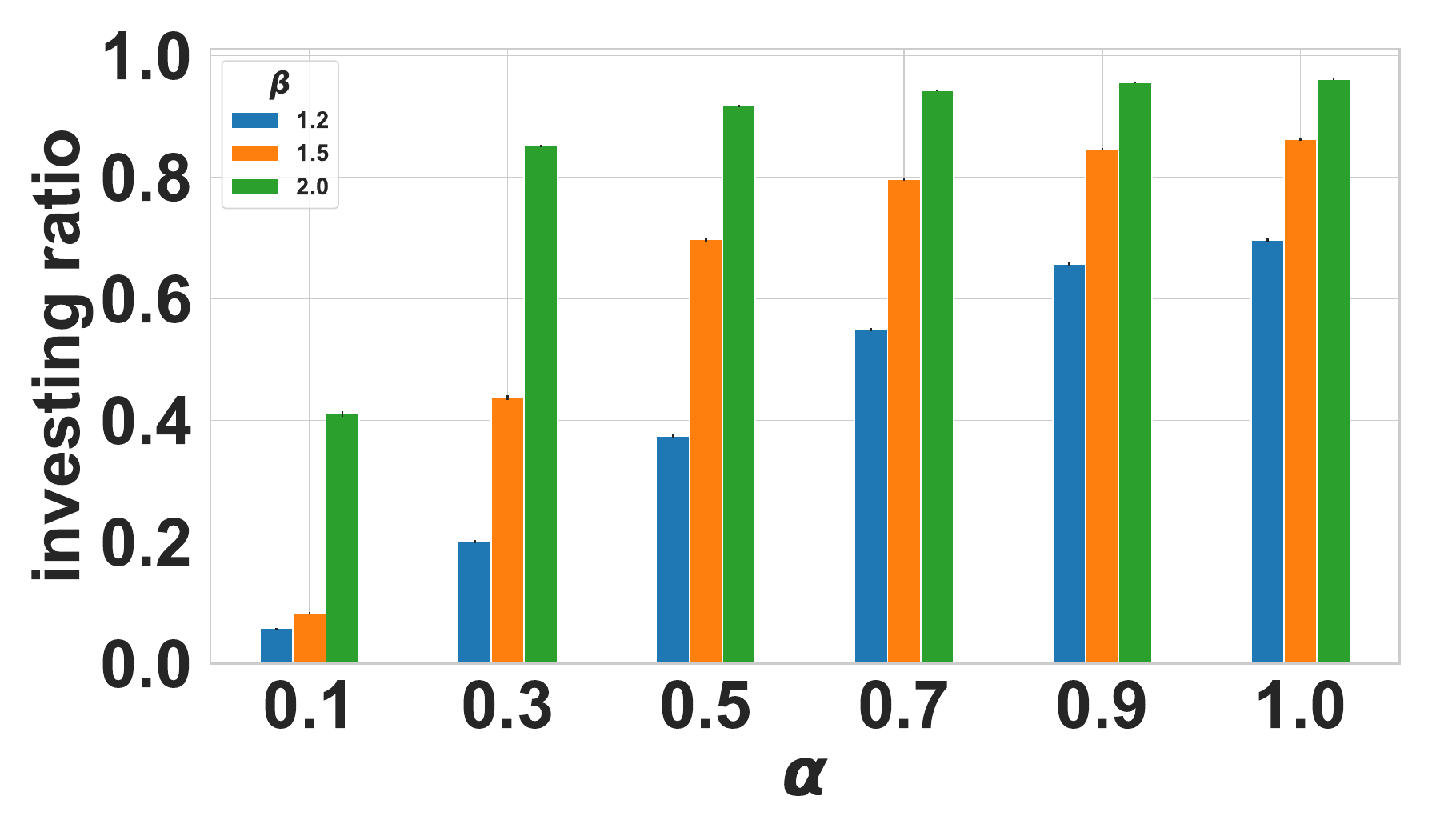}  & \includegraphics[width=0.3\columnwidth]{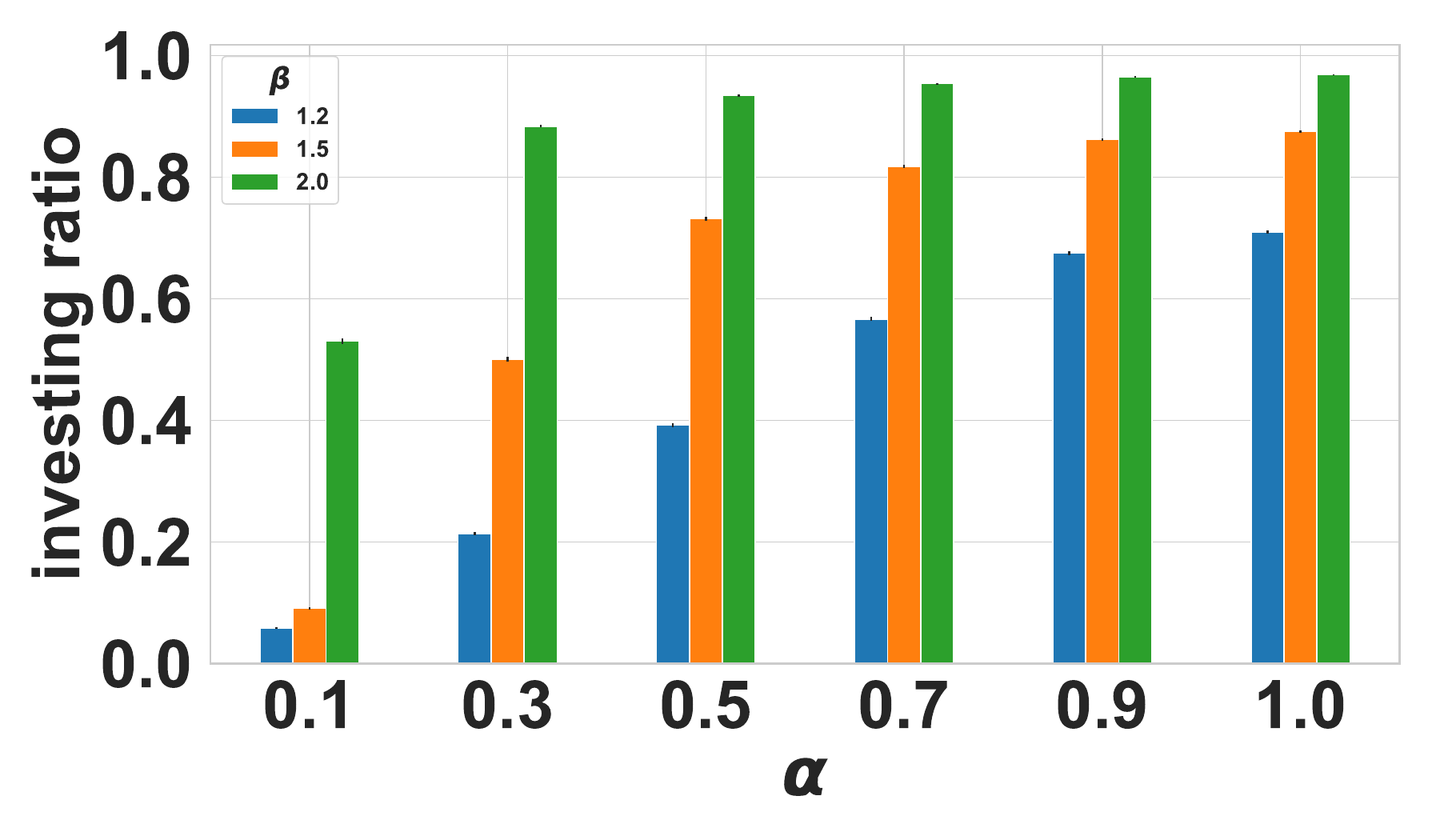} 
\end{tabular}

\small \caption{Ratio of investing players when $\gamma=0$ (top row) and $\gamma = 1$ (bottom row) on BA networks. From left to right: BA-1, BA-2, and BA-3.}
\label{fig:BA_investratio}
\end{figure}

\begin{figure}[h]
\centering
\begin{tabular}{lll}
\includegraphics[width=0.3\columnwidth]{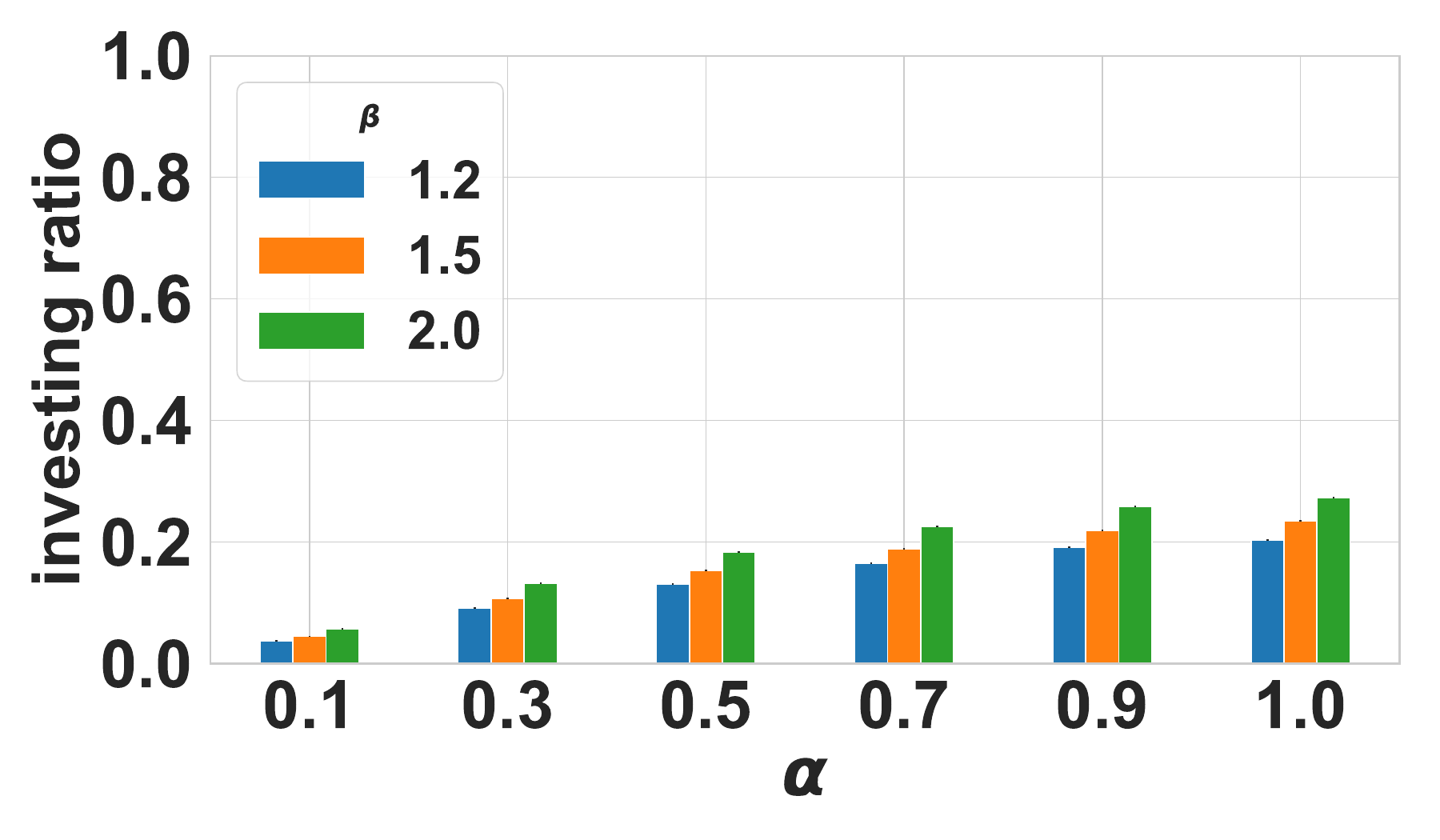} & \includegraphics[width=0.3\columnwidth]{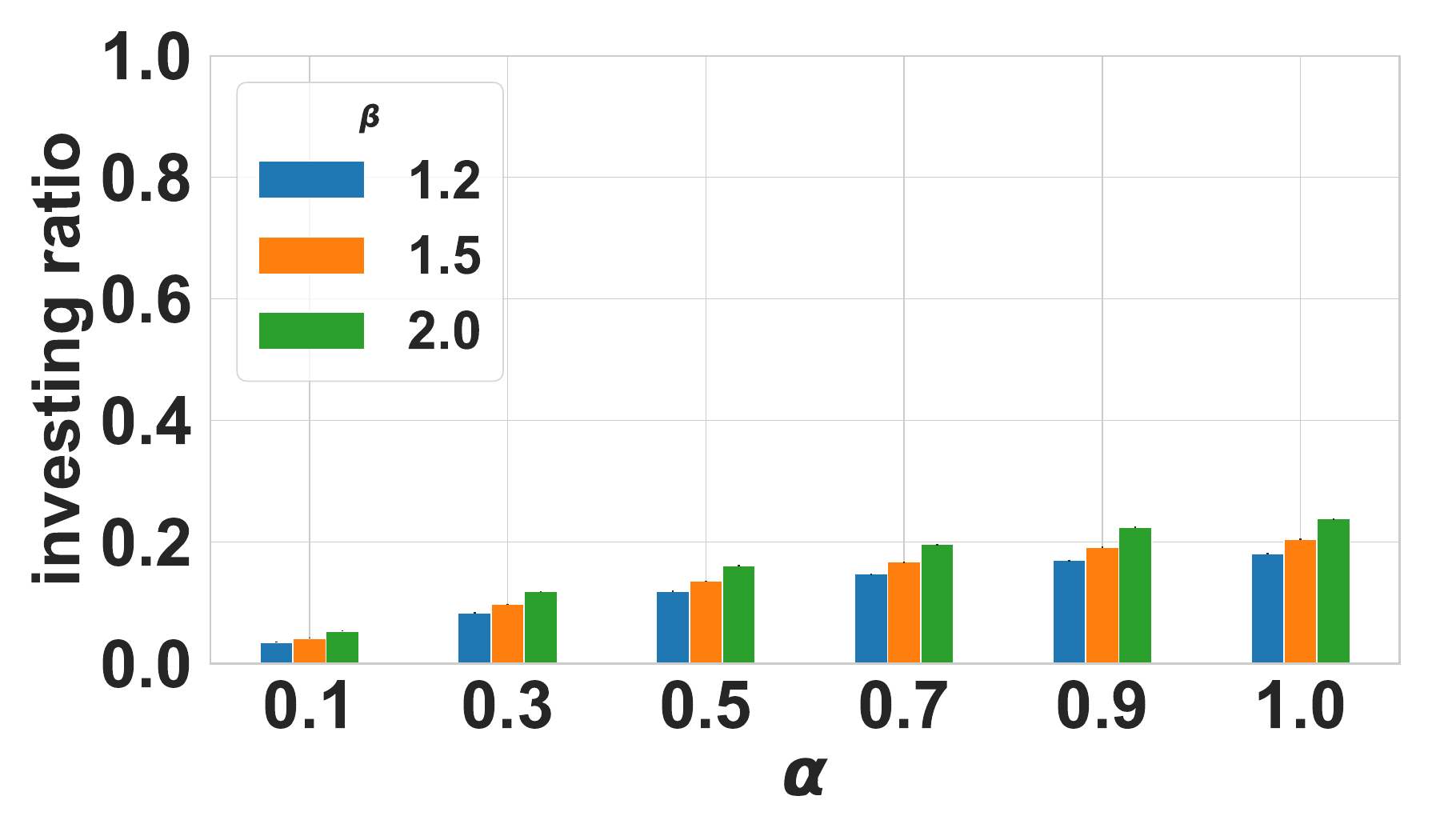} & \includegraphics[width=0.3\columnwidth]{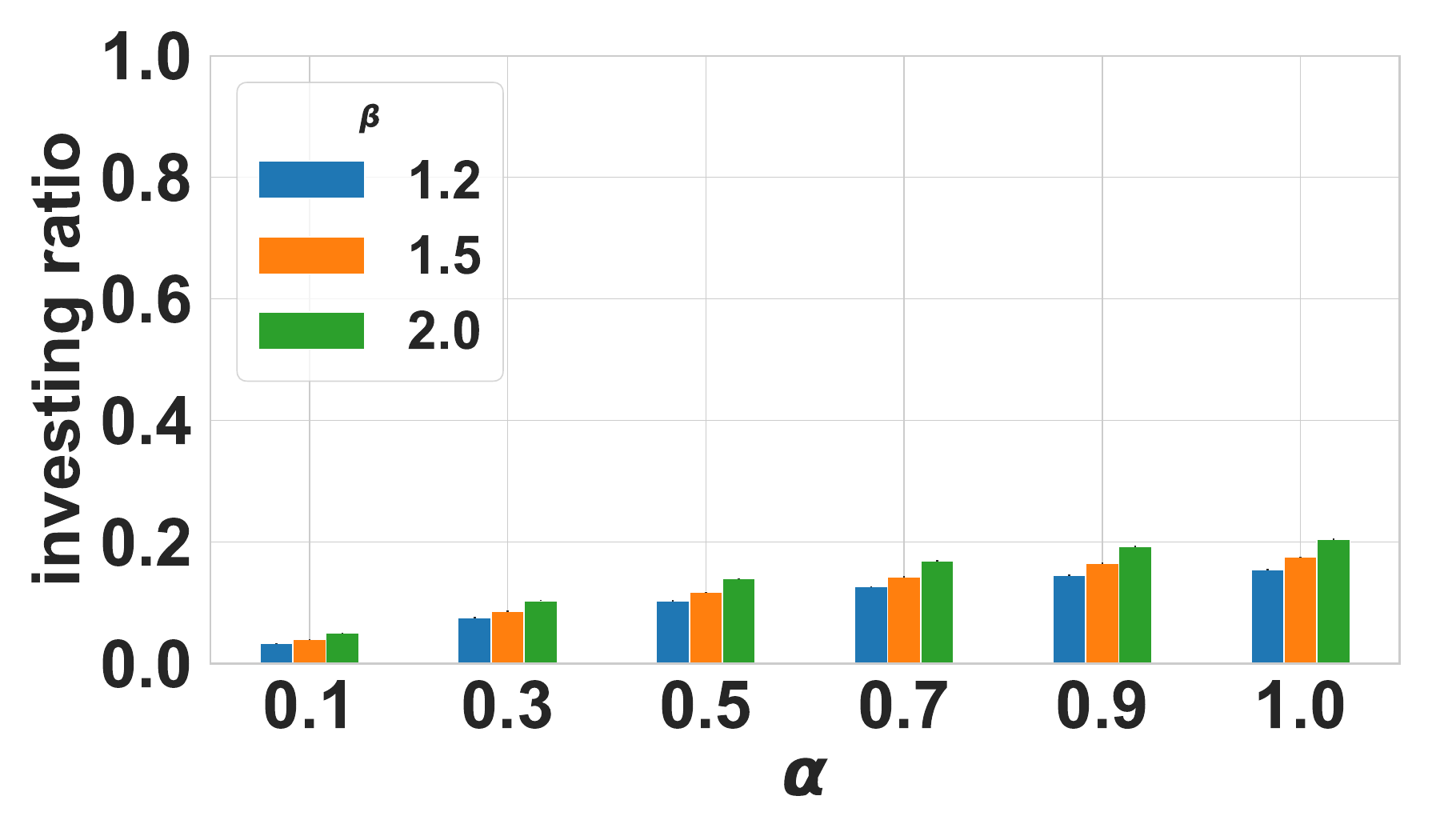} \\
\includegraphics[width=0.3\columnwidth]{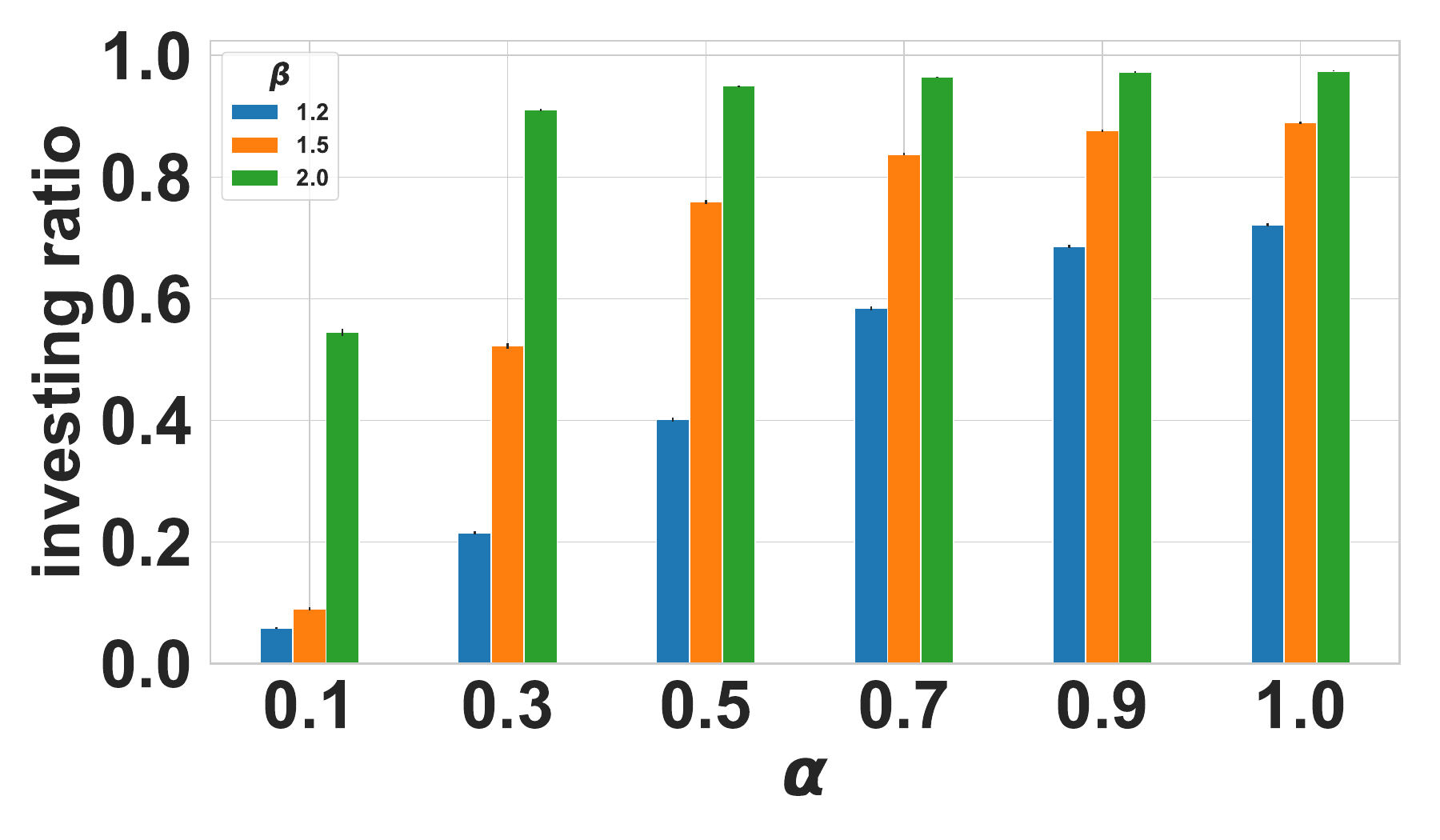} & \includegraphics[width=0.3\columnwidth]{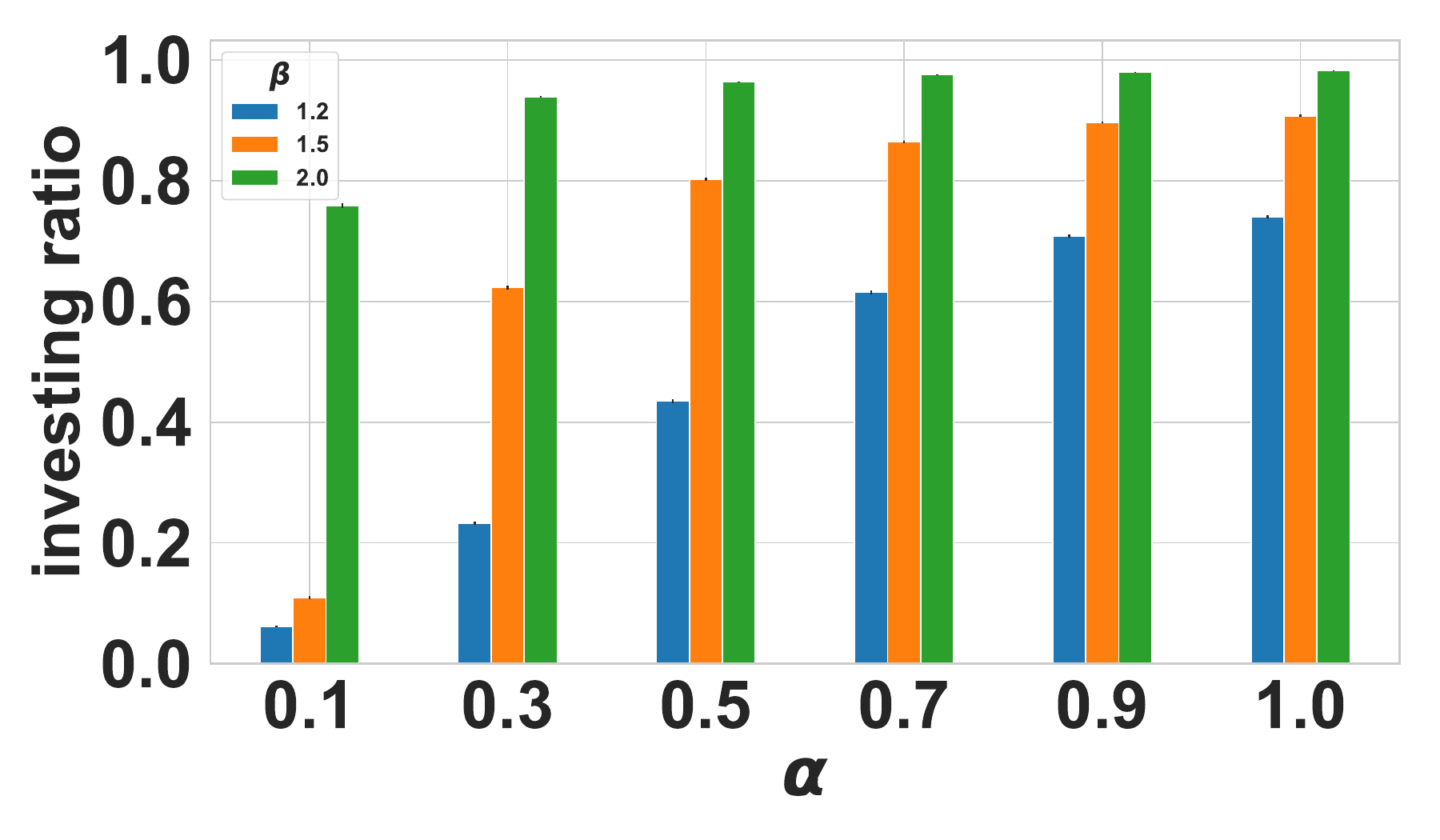}  & \includegraphics[width=0.3\columnwidth]{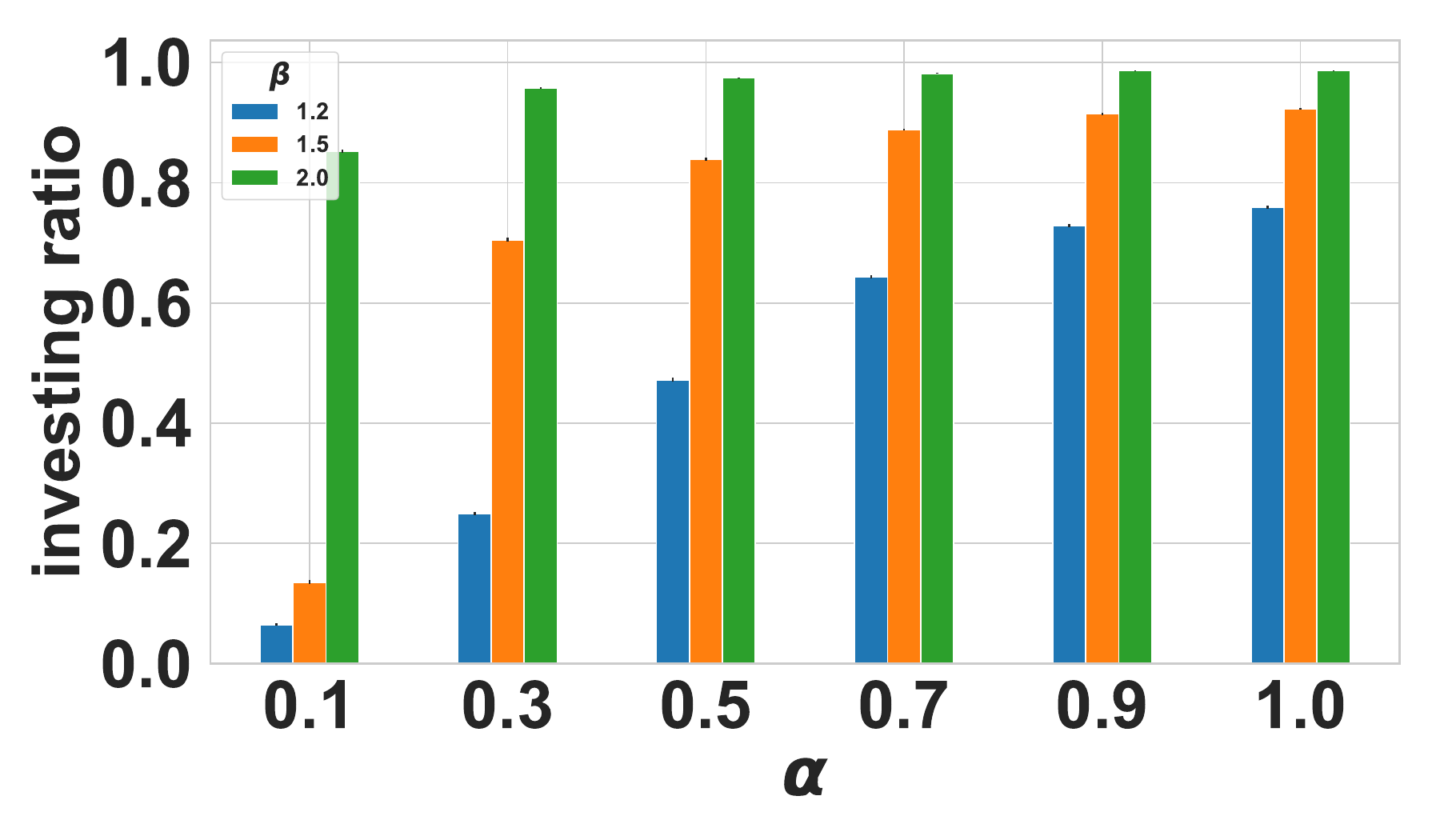} 
\end{tabular}

\small \caption{Ratio of investing players when $\gamma=0$ (top row) and $\gamma = 1$ (bottom row) on Small-World networks. From left to right: SW-1, SW-2, and SW-3.}
\label{fig:Small-World_investratio}
\end{figure}

\section{Conclusion}
We study binary networked public goods games from an algorithmic perspective.
We show that pure strategy equilibria may not exist even in highly restricted cases, and checking equilibrium existence is computationally hard in general.
However, we exhibit a number of important special cases in which such equilibria exist, can be efficiently computed, and even where socially optimal equilibria can also be efficiently found.
Finally, we presented a heuristic approach for approximately solving such games, shown to be highly effective in our experiments. 

\section*{Acknowledgments}
This work was partially supported by the National Science Foundation
(grant IIS-1903207) and Army Research Office (MURI grant W911NF1810208).

\balance
\bibliographystyle{abbrvnat}
\bibliography{paper.bib}

\end{document}